\def\BibTeX{{\rm B\kern-.05em{\sc i\kern-.025em b}\kern-.08em
    T\kern-.1667em\lower.7ex\hbox{E}\kern-.125emX}}
\newtheorem{thm}{Theorem}
\newtheorem{definition}{Definition}
\newtheorem{problem}{Problem}
\newtheorem{lemma}{Lemma}
\newcommand{\done}[1]{\textcolor{green}{(done)}}
\newcommand{\ip}[1]{\textcolor{orange}{(in progress)}}
\newcommand{\real}{\ensuremath{\rm I\!R}}
\newcommand{\etal}{\emph{et. al}}
\newcommand{\lxor}{\oplus}
\newcommand{\infset}{\ensuremath{\mathcal{I}}}
\newcommand{\compset}{\ensuremath{\mathcal{C}}}
\newcommand{\accset}{\ensuremath{\mathcal{A}}}
\newcommand{\probmat}{\ensuremath{M}}
\newcommand{\inputspace}{{\ensuremath{\mathcal{X}^n}}}
\newcommand{\outputspace}{{\ensuremath{\mathcal{Y}}}}
\newcommand{\likelyratio}[1]{\ensuremath{\frac{\pr{A(\mathbf{x}_{#1}) = y}}{\pr{A(\mathbf{x}_{#1}') = y_{#1}}}}}
\newcommand{\outputspacerr}{{\ensuremath{\mathcal{Y}^n}}}
\newcommand{\targetvals}{\ensuremath{\mathcal{V}}}
\newcommand{\pr}[1]{\ensuremath{Pr({#1})}}
\newcommand{\wmctime}{\ensuremath{\mathsf{WMC}}}
\newcommand{\wmc}{\ensuremath{WMC}}
\newcommand{\rr}{\ensuremath{\mathsf{RR}}}
\newcommand{\rrcount}{\ensuremath{\mathsf{RRcount}}}
\newif\ifanonymous
\newif\ifdraft
\title{Synthesizing Tight Privacy and Accuracy Bounds via Weighted Model Counting
}
\author{Lisa Oakley \qquad Steven Holtzen \qquad Alina Oprea}
\affil{ \small \emph{Khoury College of Computer Sciences, Northeastern University }}
\date{\vspace{-5ex}}
\begin{document}
\maketitle

\begin{abstract}
    Programmatically generating tight differential privacy (DP) bounds is a hard problem. Two core challenges are (1) finding expressive, compact, and efficient encodings of the distributions of DP algorithms, and (2) state space explosion stemming from the multiple quantifiers and relational properties of the DP definition.
    
    We address the first challenge by developing a method for tight privacy and accuracy bound synthesis using weighted model counting on binary decision diagrams, a state of the art technique from the artificial intelligence and automated reasoning communities for exactly computing probability distributions. We address the second challenge by developing a framework for leveraging inherent symmetries in DP algorithms. Our solution benefits from ongoing research in probabilistic programming languages, allowing us to succinctly and expressively represent different DP algorithms with approachable language syntax that can be used by non-experts.

    We provide a detailed case study of our solution on the binary randomized response algorithm. We also evaluate an implementation of our solution using the Dice probabilistic programming language for the randomized response and truncated geometric above threshold algorithms. We compare to prior work on exact DP verification using Markov chain probabilistic model checking. Very few existing works consider mechanized analysis of accuracy guarantees for DP algorithms. We additionally provide a detailed analysis using our technique for finding tight accuracy bounds for DP algorithms.
\end{abstract}

\section{Introduction}
\label{sec:intro}

Differential privacy (DP)~\cite{DworkMNS16} is an important property for randomized algorithms that can ensure a balance between preserving user privacy and allowing systems to draw meaningful conclusions from their data. Crafting algorithms which satisfy meaningful differential privacy bounds while maintaining useful accuracy guarantees is no small feat, and the design and analysis of these differentially private algorithms is extensive and highly technical. As is often the case with complicated, technical fields, as the landscape of differential privacy research has grown, so too has the tendency for bugs in the theory and implementations of these algorithms \cite{mironov2012significance,tramer2022debugging,stevens2022backpropagation}.

It is therefore vital for algorithm designers to have tools and frameworks to help formally and mechanically analyze their algorithms both in the design phase, and to validate their theoretical results and implementations. Furthermore, it is important that these tools be as accessible and automated as possible.

Most current methods for formally verifying differential privacy properties involve manually mechanizing existing pen-and-paper privacy proofs using proof assistants like EasyCrypt \cite{barthe2013verified}. This method of deductive verification is useful for mechanically validating asymptotic privacy bounds \cite{barthe2013verified,barthe2012probabilistic,barthe2014proving,barthe2016proving,hsu2017probabilistic}. This method, however, requires almost entirely manual work and often hinges on first having an existing proof written by hand. Importantly, it also requires a technician who understands both theoretical differential privacy and proof assistants well enough to translate the pen-and-paper proof into the language of the proof assistant. Therefore, these methods can be extremely useful, but are not practical for keeping up with the pace of ongoing differential privacy research, and are not as helpful in the algorithmic design phase where there is incomplete pen-and-paper analysis.
\begin{figure*}[]
    \includegraphics[width=\linewidth,trim={0 7cm 0 7cm},clip]{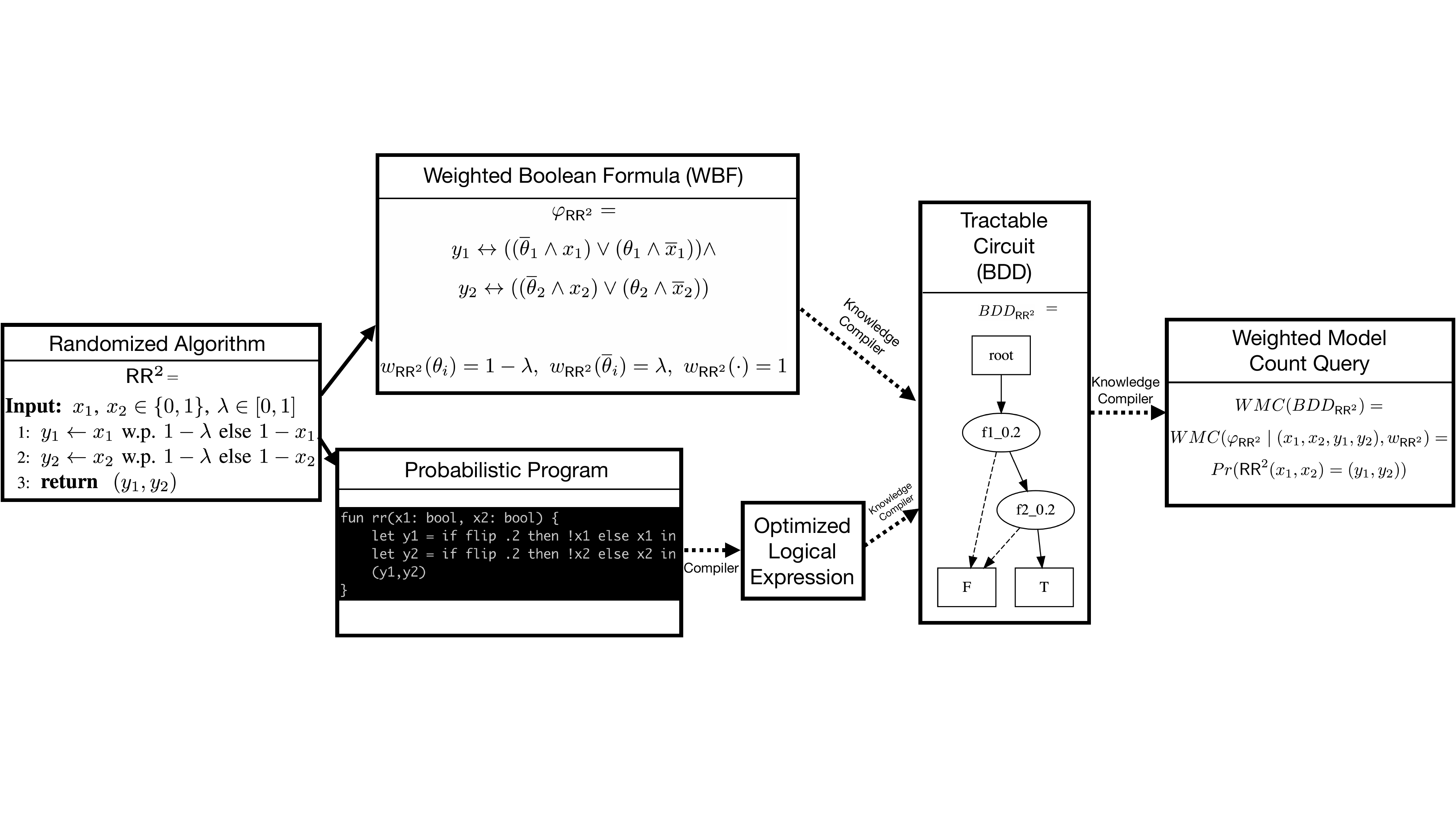}
    \caption{We represent the probability distribution of a randomized algorithm as a binary decision diagram (BDD) so that we can efficiently compute the weighted model count to perform probabilistic inference (compute the probabilities of certain events occurring). Starting from a randomized algorithm, there are many ways to compile to a BDD. The top path shows a more manual process where a user crafts a weighted boolean formula by hand and uses a \emph{knowledge compiler}, a tool for efficiently representing and querying logical formulas, to compile this into a BDD. The bottom path exhibits an example of the flow of an expressive probabilistic programming languages which allow the user to define an easily readable program defined similarly to the pseudocode. From here, the programming language can compile an optimized logical expression and pass it to the knowledge compiler. In this way, there is very little manual effort, and the resulting logical expression and BDD are optimized for WMC queries. }
    \label{fig:diagram}
\end{figure*}

There is also active research on statistically verifying differential privacy in practice, discovering lower bounds by finding counterexamples to differential privacy, or programmatically tightening existing bounds \cite{ding2018detecting,wang2020checkdp,bichsel2018dp,zhang2020testing}. In addition, privacy auditing provides a set of statistical techniques for estimating  privacy leakage of machine learning (ML) algorithms empirically and determining lower bounds on privacy~\cite{AuditingDP,nasr2021adversary,andrew2023oneshot,nasr2023tight,pillutla2023unleashing,steinke2023auditing}. While these methods are useful for automatically analyzing algorithms over large data sets, they often rely on statistical procedures and are evaluated mostly through experimental methods on specific data sets.

Existing methods here also focus almost entirely on verifying privacy, and do not analyze accuracy bounds. This is a problem because incorrect or non-existent accuracy bounds forces a reliance on proxy functions for accuracy that can be flawed and result in algorithm designers adding too much noise, or adding noise in non-optimal parts of the algorithm. Furthermore, without full accuracy analysis, applications like optimization-based synthesis for differentially private algorithms \cite{roy2021learning} must use these flawed accuracy proxy functions and end up finding potentially non-optimal solutions. 

Our goal is to develop a technique for exactly solving the tight differential privacy bound and tight accuracy bound synthesis problems. In other words, given a randomized algorithm, find the parameters which tightly bound its privacy and accuracy guarantees. Our technique will provide more theoretical grounding to the analysis of differentially private mechanisms than the statistical methods, and provide more automation than the traditional proof mechanizing process described in prior work. It will also critically fill a gap in mechanized analysis for accuracy bounds. 

Beyond its utility in finding tight privacy and accuracy bounds, another benefit of our technique is that it allows us to identify which input, neighbor, and output assignments lead to these bounds. In other words, we can identify the worst-case assignments. We provide empirical examples of using our technique to find the top worst-case assignments for accuracy, which highlights how our framework can help an algorithm designer determine which outliers are most impactful on privacy and accuracy bounds. 

Very few attempts have been made at exactly synthesizing tight privacy bounds, and those that exist are extremely limited in the kinds of algorithms they can analyze \cite{liu2018model,liu2022verifying}. As far a we are aware, no one has attempted to mechanically solve for exact accuracy bounds of a given differentially private algorithm. 

There are two main reasons this is the case. Firstly, finding an easily computable encoding of the algorithm on which to do probabilistic inference is hard. Previous work on exact DP verification has required an explicit, hand crafted Markov chain as input \cite{liu2018model,liu2022verifying}. Secondly, exact verification by quantifying over all neighboring inputs and outputs is intractable as the data sets increase in size. Our approach addresses both of these key challenges, and the resulting technique improves the feasibility of exactly solving the privacy and accuracy bound synthesis problems.

To tackle the first challenge, we use a method called weighted model counting (WMC) for computing the probabilities of certain events (otherwise known as performing \textit{probabilistic inference}). WMC is an established state-of-the-art strategy for performing discrete probabilistic inference in the artificial intelligence and automated reasoning community~\cite{chavira2008probabilistic, holtzen2020scaling}. In order to leverage WMC, one must encode our problem into a \emph{weighted Boolean formula} (WBF). We reduce the problem of computing accuracy and privacy bounds of randomized algorithms to computing the WMC of a particular Boolean formula. This technique has many benefits. Firstly, we can rely on the many decades of algorithmic advances in WMC: for instance, high-performance data-structures like binary decision diagrams (BDDs) can efficiently represent the distributions of the randomized algorithms and efficiently query for exact probabilities. Secondly, there exist expressive probabilistic programming languages that can efficiently compile into these optimized representations~\cite{holtzen2020generating}. This makes it simple to encode DP algorithms which often invoke probability distributions, and allows for a usable strategy even for non-experts in probabilistic programming. In Fig. \ref{fig:diagram}, we present an example of this pipeline for randomized response with two inputs.

To tackle the second challenge of state space explosion due to multiple for-all quantifiers on the input and output space, we introduce the concepts of inference, privacy, and accuracy sets, and provide algorithms for synthesizing bounds with respect to these restricted state spaces. These sets allow us to define which inputs and outputs are sufficient for finding privacy and accuracy bounds of DP algorithms. These sets provide a simple framework for utilizing inherent symmetries in DP algorithms to reduce the space of the exact synthesis problems. Despite being simple to define, a limitation of our approach is that there is some manual work which goes into finding these symmetry sets. We provide a detailed analysis of an example of soundly defining the symmetry sets for a randomized response case study. 

Our contributions are (1) a technique for using probabilistic programming languages and WMC for efficient probabilistic inference for DP algorithms, (2) a framework for leveraging symmetries to combat the state space explosion problem for both the tight privacy and accuracy bound synthesis problems, (3) a detailed theoretical analysis of the symmetry sets for a randomized response case study, (4) examples of using our framework for automated analysis of the randomized response and geometric above threshold algorithms, and (5) comparison to existing technique of probabilistic model checking using Markov chains. 

The paper is organized as follows. In Section \ref{sec:prelims} we provide relevant background. In Section \ref{sec:wmc_for_synthesis} we introduce our tight privacy and accuracy bound synthesis problems and explain the benefits of efficient WMC. In Section \ref{sec:symmetries} we introduce a framework for leveraging symmetries. In Section \ref{sec:rr} we provide a detailed case study using our framework for randomized response. In Section \ref{sec:eval} we present experimental results on implementations of our solution using the Dice probabilistic programming language. In Section \ref{sec:related} we outline related work and in Section \ref{sec:conclusion} we discuss future work.

\section{Preliminaries}
\label{sec:prelims}
We start by reviewing some important definitions.

\subsection{Differential Privacy}
Differential privacy is a notion of privacy which ensures that, on two closely related data sets, the output distribution of a randomized algorithm is similar. In other words, an adversary who receives the output of a differentially private algorithm has trouble distinguishing whether a specific entry is in the data set.

\begin{definition}[$\varepsilon$-Differential Privacy] A randomized algorithm $A: \inputspace{} \rightarrow \outputspace$ is $\varepsilon$-differentially private for size $n$ data sets if, for every pair of neighboring data sets $\mathbf{x}, \mathbf{x}^{\prime}$, for all $E \subseteq \outputspace$,
    \begin{equation}
    \frac{\pr{A(\mathbf{x}) \in E}}{\pr{A\left(\mathbf{x}^{\prime}\right) \in E}} \leq e^{\varepsilon}.
\end{equation}
    \end{definition}
Where neighboring data sets are data sets that differ in the value of exactly one entry.
An important property of $\varepsilon$-Differential Privacy is that for all possible outputs $y$ in $\outputspace$, 
\begin{equation}
\frac{\pr{A(\mathbf{x}) =y}}{\pr{A\left(\mathbf{x}^{\prime}\right) =y}} \leq e^{\varepsilon} 
\implies
\frac{\pr{A(\mathbf{x}) \in E}}{\pr{A\left(\mathbf{x}^{\prime}\right) \in E}} \leq e^{\varepsilon}
\label{eqn:singletons} 
\end{equation}

In other words, for $\varepsilon$-DP, it is sufficient to look at all singleton sets in the event space. In subsequent sections we will directly use the definition of differential privacy which is quantified over the singleton sets, and refer to the quantity $\frac{\pr{A(\mathbf{x}) = y}}{\pr{A\left(\mathbf{x}^{\prime}\right) = y}}$ as the \emph{likelihood ratio}. We will also limit our method to algorithms with discrete, finite inputs and outputs (i.e. \inputspace{} and \outputspace{} are discrete and finite).

\subsection{Accuracy}
We consider a widely-used notion of ($\alpha,\beta$)-accuracy from the differential privacy literature \cite{dwork2014algorithmic}. Intuitively, this measures the probability that the algorithm's output is within an $\alpha$ ball around the ``true'' or ``target'' value for the associated input.

\begin{definition}[($\alpha,\beta$)-Accuracy]
    Given $\alpha \geq 0$ and $\beta \in [0,1]$, a randomized algorithm $A:\inputspace \rightarrow \outputspace$ is ($\alpha,\beta$)-accurate if for all $\mathbf{x}\in \inputspace$, 
    \begin{equation}
        \pr{|A(\mathbf{x}) - \targetvals_\mathbf{x}| \leq \alpha} \geq 1-\beta
    \end{equation}
    where $\targetvals_\mathbf{x}$ correspond to the target outputs for each input.
\end{definition}

\subsection{Weighted Model Counting}
\subsubsection{Model Counting}
Let $\varphi$ be a Boolean formula over a set of variables. The model count of $\varphi$ is the number of solutions to $\varphi$, written as $|\{m\models\varphi\}|$ where $\{m\models\varphi\}$ is described as ``the set of models $m$ that entail $\varphi$''. For example, $\varphi = a\lor b$ where $\varphi$ is defined over $\{a,b\} \implies |\{m\models\varphi\}|=3$ since there are three satisfying assignments to $\varphi$.

\subsubsection{Weighted Boolean Formula (WBF)}
A \emph{weighted Boolean formula} is a tuple $(\varphi,w)$ where $\varphi$ is a Boolean formula over literals in $L$ and $w:L\rightarrow \real$ is its weighting function that maps literals in $\varphi$ to weights, where literals are a set of variables and their negations.

\subsubsection{Weighted Model Counting (WMC)}
We find the \textit{weighted model count} (WMC) of WBF $(\varphi,w)$ by computing
\begin{equation}
    \wmc(\varphi_w)=\sum_{m\models \varphi} \prod_{\ell\in m} w(\ell)
\end{equation}
where $m$ is the set of models, or satisfying assignments of $\varphi$, and $\ell$ represents the set of literals in $m$.

For example, we can find the weighted model count of $\varphi = a\lor b$, where $\varphi$ is defined over variables $a$ and $b$, by defining the weight function $w$ over the literals as $w(a)=1/3$, $w(b)=3/4$, $w(\overline{a})=2/3$, and $w(\overline{b})=1/4$ and computing
\begin{equation}
    \wmc(\varphi_w)=1/3\cdot3/4 + 1/3\cdot 1/4 + 2/3\cdot 3/4.
\end{equation}

\subsubsection{WMC for Probabilistic Algorithms}
We can define a WBF for any finite, discrete probabilistic algorithm $A:\inputspace\rightarrow\outputspace$ by ensuring that the probabilities of each WBF assignment correspond with probabilities of input/output pairs of the algorithm. More formally,
\begin{definition}[WBF of a probabilistic algorithm]
    Given a randomized algorithm $A:
    \inputspace\rightarrow\outputspace$ with discrete, finite $\inputspace,~\outputspace$, we say that $(\varphi,w)$ is a WBF of $A$ if $\forall(\mathbf{x},y)\in\inputspace\times\outputspace$, there exists exactly one assignment $a$ to $(\varphi,w)$ such that $\wmc(\varphi\mid a,w)=\pr{A(\mathbf{x})=y}$
\end{definition}

We use a shorthand $\varphi\mid (\mathbf{x},y)$ to indicate the WBF instantiated with the assignment $a$ for which $\wmc(\varphi\mid a,w)=\pr{A(\mathbf{x})=y}$.

For example, we consider binary randomized response for $n=2$ as described in Fig. \ref{fig:diagram}. In the randomized response algorithm, clients report a bit message (represented by $x_1$ and $x_2$ here) with probability $1-\lambda$, and flip this bit with probability $\lambda$ for some coin flip parameter $\lambda$. More formally, $\rr^{2}(x_1,x_2)=(y_1,y_2)$ such that $y_1= x_1~w.p.~(1-\lambda)$ else $1-x_1$ and $y_2= x_2~w.p.~(1-\lambda)$ else $1-x_2$. In this case, the WBF for randomized response is the tuple $(\varphi_{\rr^2},w_{\rr^2})$ where $\varphi_{\rr^2}=y_1 \leftrightarrow ((\overline{\theta}_1\land x_1)\lor(\theta_1\land \overline{x}_1))\land y_2 \leftrightarrow ((\overline{\theta}_2\land x_2)\lor(\theta_2\land \overline{x}_2))$ defined over $x_1,x_2,y_1,y_2,\theta_1,\theta_2$. We define weight function $w(\theta_i)=1-\lambda$, or the probability that the value of $x_i$ does not flip, and $w(\overline{\theta}_i)=\lambda$ as the probability that $x_i$ does flip for $i\in\{1,2\}$. For all other literals, we set $w(\cdot)=1$.

We can see that for all $(x_1,x_2,y_1,y_2)$, $\wmc(\varphi_{\rr^2}\mid (x_1,x_2,y_1,y_2),w_{\rr^2})=\pr{\rr^2(x_1,x_2)=(y_1,y_2)}$. For example, for $(x_1,x_2,y_1,y_2)=(0,0,1,0)$, $\wmc(\varphi_{\rr^2}\mid (0,0,1,0),w_{\rr^2})=w_{RR^2}(\theta_1)\cdot w_{RR^2}(\theta_2)=\lambda\cdot(1-\lambda)=\pr{\rr^2(0,0)=(1,0)}$.

\section{Weighted Model Counting for Synthesizing Tight Privacy and Accuracy Bounds}
\label{sec:wmc_for_synthesis}
We start by introducing the two main synthesis problems we will address in this paper. Then, we describe how weighted model counting can be used to efficiently perform probabilistic inference in an exhaustive solution. In Section \ref{sec:symmetries}, we will outline a framework for leveraging symmetries in the DP algorithms to further improve on this solution.

\subsection{Synthesis Problems}
The first problem is the \emph{privacy bound synthesis problem}. We want to find the triple of neighboring input and output values which minimize the likelihood ratio.

\begin{problem}[Privacy Bound Synthesis]
    Given a randomized algorithm $A: \inputspace \rightarrow \outputspace$ with discrete, finite $\inputspace,~\outputspace$, find the triple $(\mathbf{x},\mathbf{x}',y)$ that maximizes $\frac{\pr{A(\mathbf{x}) = y}}{\pr{A(\mathbf{x}') = y}}$.
\end{problem}

If we take solution $(\mathbf{x},\mathbf{x}',y)$ to the privacy bound synthesis problem to find $e^\varepsilon=\frac{\pr{A(\mathbf{x}) = y}}{\pr{A(\mathbf{x}') = y}}$, we have a tight $\varepsilon$-DP bound for algorithm $A$. 

The second synthesis problem is the \emph{tight accuracy bound synthesis problem}. Here we want to find the input to $A$ which minimizes the accuracy probability with respect to a given $\alpha$ bound.

\begin{problem}[Accuracy Bound Synthesis]
    Given a randomized algorithm $A: \inputspace \rightarrow \outputspace$ with discrete, finite $\inputspace,~\outputspace$, and set of target values $\targetvals$, and 
    $\alpha \geq 0$, find $\mathbf{x}$ which minimizes $\pr{|A(\mathbf{x}) - \targetvals_\mathbf{x}| \leq \alpha}$.
\end{problem}

If we take solution $\mathbf{x}$ from the Accuracy Bound Synthesis problem to find $1-\beta=\pr{|A(\mathbf{x}) - \targetvals_\mathbf{x}| \leq \alpha}$, we have a tight ($\alpha,\beta$)-accuracy bound for algorithm $A$.

\subsection{WMC and Probabilistic Programming For Privacy Bound Synthesis}
We start by looking at the most straightforward solution to the privacy bound synthesis problem, which is to iterate through every input/neighbor/output triple and compute the likelihood ratio for that set. In other words, we consider a solution where, for $A:\inputspace\rightarrow\outputspace$ we exhaustively compute $\frac{\pr{A(\mathbf{x}) = y}}{\pr{A(\mathbf{x}') = y}}$ for each $(\mathbf{x},\mathbf{x}',y)\in  \inputspace\times\inputspace\times\outputspace$ where $\mathbf{x},\mathbf{x}'$ are neighbors. We will refine this solution further in Section \ref{sec:symmetries}, but for now we use this simple solution to illustrate the utility of WMC for probabilistic inference.

If we consider even this simple exhaustive solution, it is immediately apparent that for a complex randomized algorithm, $A$, it is both necessary and non-trivial to perform probabilistic inference (i.e. compute the probabilities $\pr{A(\mathbf{x}) = y}$) for each input and output pair.

The power of our technique comes from performing probabilistic inference (computing the probability distributions of the algorithm) by compiling a representation of the probabilistic algorithm into a \emph{tractable circuit} for which computing the weighted model count can be performed efficiently~\cite{chavira2008probabilistic,holtzen2020generating}. 

We consider our target data structure to be a \emph{binary decision diagram} (BDD). A BDD is a tree-like data structure that represents boolean formulas. We prefer to use this data structure for WMC computation because BDDs can leverage shared structure in the encoding to simplify a complicated function with many free variables into a compact, efficiently queryable circuit. As we see in Fig. \ref{fig:diagram}, a program with two inputs, two outputs, and two weighted coin flip parameters can be encoded into a BDD with only two internal nodes when the input is fixed. When we compare this to other popular methods for exact probabilistic inference such as Markov chains, we see in Section \ref{sec:eval} that a BDD is able to scale much more efficiently with the state size on static systems such as DP algorithms. 

For example, prior work has used Markov chain model checking for exact DP verification \cite{liu2018model,liu2022verifying}. This technique requires an explicit Markov chain model of the probability distribution, which is optimized for dynamic models with small state space which model long-running services. In this case, computation becomes inefficient quickly as state size increases, as is the case in DP algorithms \cite{HoltzenCAV21}.

Using BDDs for WMC computation is therefore more efficient for our problem space than in prior attempts, but we also desire an expressive encoding method for the randomized algorithm. To use WMC to perform inference on the randomized algorithm, we must encode the algorithm as a weighted Boolean formula with free variables that represent the inputs, outputs, and coin flips/distributions, and craft a weighting function to represent the randomness of the algorithm. For some algorithms, this is straightforward and little manual work is required to discover the right formula and weighting function. We provide an example of this manual process for randomized response in Section \ref{sec:rr}. However, for many algorithms, this manual process can be difficult, and the resulting weighted Boolean formula can be unnecessarily large or otherwise inefficient to work with.

In recent years, there has been a significant amount of work on developing probabilistic programming languages that can efficiently compile weighted Boolean formulas from probabilistic programs \cite{holtzen2020scaling}. Importantly, these weighted Boolean formulas are optimized to efficiently compile into tractable data structures such as BDDs that can be efficiently queried to find the weighted model count. This means that we can write our randomized algorithm as a probabilistic program in these languages and to easily compile it into a WBF and perform efficient WMC. We show an example of this full pipeline in Fig. \ref{fig:diagram}.

In summary, using WMC for performing probabilistic inference is a method of efficient computation of the probabilities needed to find tight bounds. Furthermore, recently developed probabilistic programming languages provide approachable and intuitive methods for encoding DP algorithms. This is a big improvement over prior work which required that the DP algorithm be explicitly encoded as a Markov chain~\cite{liu2018model,liu2022verifying}.

\subsection{WMC for Exhaustive Accuracy Bound Synthesis}
Similarly to our exhaustive approach for solving the privacy bound synthesis problem, we can also leverage the benefits of WMC in an exhaustive approach for solving the accuracy bound synthesis problem. Because the target value of the DP algorithm (i.e. the value of the query with no added noise) is non-probabilistic, we can formulate an optimization which which will directly use probabilistic inference to solve the accuracy bound synthesis problem.

\begin{thm}
    \label{thm:acc_minimizer}
    Given randomized $A:\inputspace\rightarrow\outputspace$, with discrete, finite $\inputspace$ and $\outputspace$, set of target values $\targetvals$, and $\alpha\geq 0$, the $\mathbf{x}$ which minimizes $\sum_{y\in[\targetvals_\mathbf{x}-\alpha,\targetvals_\mathbf{x}+\alpha]}P(A(\mathbf{x})=y)$ also minimizes $Pr(|A(\mathbf{x}) - \targetvals_\mathbf{x}| \leq \alpha)$
\end{thm}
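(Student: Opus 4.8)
The plan is to show that the two objective functions are in fact the \emph{same} function of $\mathbf{x}$, from which the claim about shared minimizers follows immediately. First I would rewrite the accuracy event as a membership condition: for any fixed $\mathbf{x}$, the inequality $|A(\mathbf{x}) - \targetvals_\mathbf{x}| \leq \alpha$ holds exactly when $A(\mathbf{x})$ takes a value in the interval $[\targetvals_\mathbf{x} - \alpha, \targetvals_\mathbf{x} + \alpha]$. Hence the events $\{|A(\mathbf{x}) - \targetvals_\mathbf{x}| \leq \alpha\}$ and $\{A(\mathbf{x}) \in [\targetvals_\mathbf{x} - \alpha, \targetvals_\mathbf{x} + \alpha]\}$ coincide and therefore have equal probability.

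Next I would exploit the hypothesis that $\outputspace$ is discrete and finite. This means the set of outputs lying in the interval, namely $\outputspace \cap [\targetvals_\mathbf{x} - \alpha, \targetvals_\mathbf{x} + \alpha]$, is a finite collection of distinct values, and the singleton events $\{A(\mathbf{x}) = y\}$ ranging over these values are pairwise disjoint and together exhaust the membership event. By finite additivity of the probability measure, the probability of the membership event then decomposes as a finite sum of singleton probabilities:
\begin{equation}
\pr{|A(\mathbf{x}) - \targetvals_\mathbf{x}| \leq \alpha} = \sum_{y \in [\targetvals_\mathbf{x} - \alpha,\, \targetvals_\mathbf{x} + \alpha]} \pr{A(\mathbf{x}) = y}.
\end{equation}
Since this identity holds for every $\mathbf{x} \in \inputspace$, the two quantities appearing in the statement are equal as functions on the input space.

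Finally, because the maps $\mathbf{x} \mapsto \sum_{y} \pr{A(\mathbf{x}) = y}$ and $\mathbf{x} \mapsto \pr{|A(\mathbf{x}) - \targetvals_\mathbf{x}| \leq \alpha}$ agree pointwise, they have identical sets of minimizers over the finite domain $\inputspace$; in particular, any $\mathbf{x}$ minimizing the first also minimizes the second, which is exactly the claim.

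The substantive content lies entirely in the first step---correctly reading the accuracy event as an interval-membership event---together with the observation that discreteness and finiteness of $\outputspace$ license the decomposition into a finite sum of singleton probabilities. I do not expect a genuine obstacle here; the only point requiring care is ensuring the summation index ranges over $\outputspace \cap [\targetvals_\mathbf{x} - \alpha, \targetvals_\mathbf{x} + \alpha]$ rather than a continuum, so that the sum is finite and finite additivity applies cleanly. Because the two objectives turn out to be literally equal rather than merely monotonically related, no further argument about ties or approximate minimizers is needed, and the result reduces to an unfolding of the $(\alpha,\beta)$-accuracy definition.
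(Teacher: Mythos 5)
Your proposal is correct and follows essentially the same route as the paper's proof: rewrite the accuracy event $|A(\mathbf{x}) - \targetvals_\mathbf{x}| \leq \alpha$ as interval membership and then, using discreteness and finiteness of \outputspace, decompose its probability into the finite sum of singleton probabilities, so the two objectives agree pointwise. The paper states this as two one-line equalities; your version merely makes explicit the disjointness/finite-additivity justification and the final observation about shared minimizers, both of which the paper leaves implicit.
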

\begin{proof}
    \begin{align*}
        &Pr(|A(\mathbf{x}) - \targetvals_\mathbf{x}| \leq \alpha) \\
        &= Pr(\targetvals_\mathbf{x} - \alpha \leq A(\mathbf{x})\leq \targetvals_\mathbf{x} + \alpha)\\
        &=\sum_{y\in[\targetvals_\mathbf{x}-\alpha,\targetvals_\mathbf{x}+\alpha]}\pr{A(\mathbf{x})=y}.
    \end{align*}
\end{proof}

We can therefore also use weighted model counting for the simple exhaustive approach for $A:\inputspace\rightarrow\outputspace$ where we compute $\sum_{y\in[\targetvals_\mathbf{x}-\alpha,\targetvals_\mathbf{x}+\alpha]}\pr{A(\mathbf{x})=y}$ for each $\mathbf{x}\in \inputspace$. Again, in Section \ref{sec:symmetries}, we will show a further refinement of this exhaustive algorithm.

\section{Framework for Leveraging Symmetries}
\label{sec:symmetries}
Despite the gains in efficiency and usability we get from using WMC for probabilistic inference, the exhaustive solution still results in state space explosion as the size of the input and output space increase. Fortunately, many differentially private algorithms have inherent symmetry. We leverage these symmetries to make our WMC solutions more tractable by introducing algorithms for computing the privacy and accuracy bounds which use \textit{inference}, \textit{privacy}, and \textit{accuracy sets} for limiting the search space.

\subsection{Inference Algorithm}
Our refined solutions for privacy and accuracy bound synthesis both use probabilistic inference via weighted model counting as a subprocess. We provide an algorithm for pre-computation of necessary probabilities to help with modularization and simplification of the algorithm and analysis.

In Algorithm \ref{alg:inf}, we describe how we compute a matrix of assignment probabilities for a given restricted \textit{inference set}. An \emph{inference set} $\infset_{A}\subseteq\inputspace\times\outputspace$ for $A:\inputspace\rightarrow\outputspace$ is a set of input/output pairs on which to do probabilistic inference.

\begin{algorithm}
    \caption{$\mathsf{INFERENCE}$}
    \label{alg:inf}
    \begin{algorithmic}[1]
    \renewcommand{\algorithmicrequire}{\textbf{Input:}}
    \renewcommand{\algorithmicensure}{\textbf{Output:}}
    \REQUIRE WBF $(\varphi,w)$ of $A:\inputspace\rightarrow\outputspace$ and inf. set $\infset$
    \ENSURE \probmat{} a matrix of probabilities 
    \STATE Initialize $\probmat{}$ matrix
    \FOR{$(\mathbf{x},y)$ in \infset }
        \STATE $\probmat(\mathbf{x},y)\gets \wmc(\varphi\mid (\mathbf{x},y))$
    \ENDFOR
    \RETURN \probmat{}
    \end{algorithmic}
\end{algorithm}

 Algorithm \ref{alg:inf} runs in $O(\mathsf{WMC}\cdot|\infset|)$ where $\mathsf{WMC}$ is the complexity of the WMC computation. We will show in Section \ref{sec:rr} that for some algorithms, we can reduce the size of the inference set such that the complexity is linear in the length of the input vector, multiplied by the complexity of the WMC computation.

 Returning to our example of binary randomized response, an inference set for $\rr^2$ could be $\infset_{\rr^2}=\{([0,0],[0,0]),~([1,0],[0,0]),~([1,1],[0,0])\}$. Table \ref{tab:inf} shows the output of this computation. We show in Section \ref{sec:rr} that this table grows linearly in $n$.
 
\begin{table}[]
    \centering
    \resizebox{.4\linewidth}{!}{%
    \begin{tabular}{lllll}
        &                                & \multicolumn{3}{c}{input}                                                                          \\
        & \multicolumn{1}{l|}{}          & \multicolumn{1}{l|}{{[}0,0{]}}       & \multicolumn{1}{l|}{{[}1,0{]}}                 & {[}1,1{]}   \\ \cline{2-5} 
  output & \multicolumn{1}{l|}{{[}0,0{]}} & \multicolumn{1}{l|}{$(1-\lambda)^2$} & \multicolumn{1}{l|}{$\lambda\cdot(1-\lambda)$} & $\lambda^2$
  \end{tabular}%
  }
\caption{Output of Algorithm \ref{alg:inf} for $A=\rr^2$, $\infset_{\rr^2}=\{([0,0],[0,0]),~([1,0],[0,0]),~([1,1],[0,0])\}$}
\label{tab:inf}
\end{table}

\subsection{Finding Tight Privacy Bound}
Once we have computed the necessary probabilities using weighted model counting, we can use these probabilities to find the maximum likelihood ratio. Because we consider only neighboring inputs, and due to the symmetries in differential privacy algorithms, we do not have to enumerate all pairs of inputs and all output events to cover all sufficient ratios. We therefore define our algorithm over a set of inputs and outputs which are a sufficient subset of all computable likelihood ratios. 

\begin{definition}[Privacy Set]
    \label{def:compset}
    A \emph{privacy set} \compset{} for $A:\inputspace\rightarrow\outputspace$ is a set of tuples of the form $(\mathbf{x}_\compset,\mathbf{x}_\compset',y_\compset)$ such that
    \begin{enumerate}
        \item $(\mathbf{x}_\compset,\mathbf{x}_\compset',y_\compset)\in \inputspace\times\inputspace\times\outputspace$
        \item $\forall (\mathbf{x},\mathbf{x}',y)\in  \inputspace\times\inputspace\times\outputspace$ where $\mathbf{x}, \mathbf{x}'$ are neighbors, there exists $(\mathbf{x}_\compset,\mathbf{x}'_\compset,y_\compset)\in\compset$ such that $$\frac{Pr(A(\mathbf{x}) = y)}{Pr(A(\mathbf{x}') = y)}=\frac{Pr(A(\mathbf{x}_\compset) = y_\compset)}{Pr(A(\mathbf{x}'_\compset) = y_\compset)}.$$
    \end{enumerate}
\end{definition}

We present the algorithm for computing worst case privacy bounds in Algorithm \ref{alg:priv}. We use the $\mathsf{INFERENCE}$ algorithm (Algorithm \ref{alg:inf}) as a subprocess to compute exact probabilistic inference via weighted model counting.

\begin{algorithm}
    \caption{Privacy Bound Synthesis}
    \label{alg:priv}
    \begin{algorithmic}[1]
    \renewcommand{\algorithmicrequire}{\textbf{Input:}}
    \renewcommand{\algorithmicensure}{\textbf{Output:}}
    \REQUIRE WBF $(\varphi,w)$ of $A:\inputspace\rightarrow\outputspace$, privacy set $\compset$, and inference set $\infset$ such that $\forall (\mathbf{x}_\compset,\mathbf{x}_\compset',y_\compset)\in \compset$, $(\mathbf{x}_\compset, y_\compset)\in \infset$ and $(\mathbf{x}_\compset', y_\compset)\in \infset$
    \ENSURE Maximum likelihood ratio $p$ and worst case assignments $c=(\mathbf{x},\mathbf{x}',y)$
    \STATE $\probmat \gets \mathsf{INFERENCE}(\varphi, \infset)$
    \STATE $p\gets 0$, $c\gets\varnothing$
    \FOR{$(\mathbf{x},\mathbf{x}',y)$ in \compset}
        \IF{$\frac{M(\mathbf{x},y)}{M(\mathbf{x}',y)}>p$}
        \STATE $p\gets \frac{M(\mathbf{x},y)}{M(\mathbf{x}',y)}$
        \STATE $c\gets (\mathbf{x},\mathbf{x}',y)$
        \ENDIF
    \ENDFOR
    \RETURN $p,c$
    \end{algorithmic}
\end{algorithm}

Returning to our binary randomized response example, we can use $\infset_{\rr^2}=\{([0,0],[0,0]),$ $([1,0],[0,0]),$ $([1,1],[0,0])\}$, and $\compset_{\rr^2}=\{([0,0],[1,0],[0,0]),$ $([1,0],[0,0],[0,0]),$ $([1,0],[1,1],[0,0]),$ $([1,1],[1,0],[0,0])\}$ and set $\lambda=0.2$. In this case, the output of Algorithm \ref{alg:priv} would be $c=([0,0],[1,0],[0,0])$, $p=4$. This means that $([0,0],[1,0],[0,0])$ is the worst case assignment for $\rr^2$ and $e^\varepsilon=4$ is a tight privacy bound.

\subsubsection{Correctness and Complexity of Privacy Bound Synthesis Algorithm}
To verify correctness of our algorithm, we must show that the maximal likelihood ratio found using the potentially restricted inference and privacy set is the same as the maximal likelihood ratio using exhaustive inference and privacy sets (i.e. $\infset=\inputspace\times\outputspace$ and $\compset=\inputspace\times\inputspace\times\outputspace$ where all $x,x'$ are neighbors). 

\begin{thm}
    The output $c$ of Algorithm \ref{alg:priv} is a solution to the Privacy Bound Synthesis problem.
\end{thm}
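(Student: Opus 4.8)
The plan is to show that the value $p$ and triple $c$ returned by Algorithm \ref{alg:priv} coincide with those obtained by the exhaustive search over all neighboring triples, which by the correctness criterion solves the Privacy Bound Synthesis problem. I would organize the argument around two facts: (i) the matrix $\probmat$ returned by $\mathsf{INFERENCE}$ records the \emph{true} probabilities on every entry the main loop inspects, and (ii) maximizing the likelihood ratio over the restricted privacy set $\compset$ yields the same optimum as maximizing over the full (neighboring) domain.

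First I would establish correctness of the precomputed probabilities. The precondition of Algorithm \ref{alg:priv} guarantees that for every $(\mathbf{x}_\compset,\mathbf{x}'_\compset,y_\compset)\in\compset$ both $(\mathbf{x}_\compset,y_\compset)$ and $(\mathbf{x}'_\compset,y_\compset)$ lie in $\infset$, so by the loop of Algorithm \ref{alg:inf} the matrix $\probmat$ has an entry $\probmat(\mathbf{x},y)=\wmc(\varphi\mid(\mathbf{x},y))$ for each such pair. Invoking the definition of a WBF of a probabilistic algorithm, $\wmc(\varphi\mid(\mathbf{x},y))=\pr{A(\mathbf{x})=y}$, so every ratio $\probmat(\mathbf{x},y)/\probmat(\mathbf{x}',y)$ computed in the loop equals the exact likelihood ratio $\pr{A(\mathbf{x})=y}/\pr{A(\mathbf{x}')=y}$. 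A routine loop-invariant argument then shows that on termination $p=\max_{(\mathbf{x},\mathbf{x}',y)\in\compset}\pr{A(\mathbf{x})=y}/\pr{A(\mathbf{x}')=y}$ and that $c$ is a triple attaining this maximum.

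Next I would prove the equality of the two optima. Let $p^\star$ denote the maximum likelihood ratio over all neighboring triples, i.e. the value the exhaustive algorithm computes. Since $\compset\subseteq\inputspace\times\inputspace\times\outputspace$ consists of (neighbor) triples in the feasible domain, every ratio attained on $\compset$ is also attained by a neighboring triple, giving $p\le p^\star$. For the reverse inequality I would invoke condition 2 of the Privacy Set definition (Definition \ref{def:compset}): any neighboring triple $(\mathbf{x},\mathbf{x}',y)$ --- in particular a maximizer of $p^\star$ --- has a counterpart $(\mathbf{x}_\compset,\mathbf{x}'_\compset,y_\compset)\in\compset$ with an identical ratio, so $p^\star\le p$. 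Combining the inequalities gives $p=p^\star$, and since $c$ attains $p$ it is a valid maximizer, hence a solution to the Privacy Bound Synthesis problem.

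The main obstacle is the second step, and specifically the two-sidedness of the optimum: the covering property in Definition \ref{def:compset} only supplies the ``$\ge$'' direction (no maximizing ratio is lost), so I must separately argue that $\compset$ introduces no spurious ratio exceeding $p^\star$. This relies on reading condition 1 so that the triples of $\compset$ are themselves feasible neighboring triples (as they are in the $\rr^2$ example), rather than arbitrary elements of $\inputspace\times\inputspace\times\outputspace$; making this hypothesis explicit is what cleanly closes the argument. I would also flag the implicit assumption that the denominators $\pr{A(\mathbf{x}')=y}$ are nonzero on the inspected triples, since otherwise the likelihood ratio --- and hence the notion of a tight bound --- is undefined.
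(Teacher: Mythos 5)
Your proposal is correct and follows the same skeleton as the paper's proof: first use Algorithm \ref{alg:inf} plus the WBF definition to argue that every ratio the main loop inspects is an exact likelihood ratio, then use the covering property of the privacy set to transfer optimality from $\compset$ to the full space of neighboring triples. The substantive difference is your two-inequality decomposition in the second step, and it buys real rigor. The paper argues only one direction: from condition 2 of Definition \ref{def:compset} (every neighboring triple's ratio is attained inside $\compset$) it concludes immediately that $c$ maximizes the ratio over all neighboring triples. That covering property yields only that the maximum over $\compset$ is at least the maximum over neighboring triples; the converse direction --- that $\compset$ introduces no spurious ratio exceeding the neighbor maximum --- is used silently, and it is exactly the obstacle you isolate. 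You are also right that condition 1 of Definition \ref{def:compset} as written does not supply it: it requires only $\compset \subseteq \inputspace\times\inputspace\times\outputspace$, so a set satisfying the definition could legally contain a non-neighboring triple whose ratio strictly exceeds the tight bound, and Algorithm \ref{alg:priv} would then return that inflated value. Your explicit hypothesis --- that the triples of $\compset$ are themselves neighboring triples (as they are for $\compset_{\rr}$ in the case study), or at least that every ratio attained on $\compset$ is attained by some neighboring triple --- is precisely what closes the argument, so your version is best read as a repaired form of the paper's proof rather than a restatement of it. Your final remark about nonzero denominators flags a further issue the paper glosses over: when $\pr{A(\mathbf{x}')=y}=0$ the likelihood ratio, and hence the returned bound, is undefined.
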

\begin{proof}
    Let WBF $(\varphi,w)$ be the WBF of algorithm $A:\inputspace\rightarrow\outputspace$ with inference and privacy sets $\infset,~\compset$. From the definition of Algorithm \ref{alg:inf}, $\forall(\mathbf{x},y)\in\infset$, $M(\mathbf{x},y)=\pr{A(\mathbf{x},y)}$. Therefore, by the definition of the Privacy Bound Synthesis algorithm, output $c$ maximizes $\frac{M(\mathbf{x}_\compset,y_\compset)}{M(\mathbf{x}'_\compset,y_\compset)}$ over all $(\mathbf{x},\mathbf{x}',y)\in\compset$. By the definition of a privacy set, for all $(\mathbf{x},\mathbf{x}',y)\in \inputspace\times\inputspace\times\outputspace$, where $\mathbf{x},~\mathbf{x}'$ are neighbors, there exists a $(\mathbf{x},\mathbf{x}',y)\in\compset$ such that $\frac{\pr{A(\mathbf{x}) = y}}{\pr{A(\mathbf{x}') = y}}=\frac{\pr{A(\mathbf{x}_\compset) = y_\compset}}{\pr{A(\mathbf{x}'_\compset) = y_\compset}}$. Therefore, $c$ maximizes $\frac{\pr{A(\mathbf{x}) = y}}{\pr{A(\mathbf{x}') = y}}$ over all $(\mathbf{x},\mathbf{x}',y)\in \inputspace\times\inputspace\times\outputspace$ where $\mathbf{x},~\mathbf{x}'$ are neighbors and is therefore a solution to the Privacy Bound Synthesis problem.
\end{proof}

Algorithm \ref{alg:priv} runs in $O(\mathsf{WMC}\cdot|\infset|+|\compset|)$. Again, in Section \ref{sec:rr}, we will show a case where the privacy set can be restricted such that $|\infset|$ and $|\compset|$ are linear in $n$.

When we concretely define the inference set $\infset$ and privacy set $\compset$ for a specific algorithm, we must prove two things:
\begin{enumerate}
    \item \compset{} satisfies the definition of a valid privacy set in Def. \ref{def:compset}, and
    \item All the neighboring assignments in \compset{} are present in \infset{}, i.e. $\forall (\mathbf{x}_\compset,\mathbf{x}_\compset',y_\compset)\in \compset$, $(\mathbf{x}_\compset, y_\compset)\in \infset$ and $(\mathbf{x}_\compset', y_\compset)\in \infset$.
\end{enumerate}

We provide an example of one such instantiation for randomized response in Section \ref{sec:rr}.

\subsection{Finding Tight Accuracy Bound}
In the case of accuracy bound synthesis, we can also limit the number of values we need to compute to leverage symmetries in the problem to reduce the runtime of the accuracy computation. We introduce the accuracy set as follows.
\begin{definition}[Accuracy Set]
    \label{def:accset}
    An \emph{accuracy set} \accset{} for $A:\inputspace\rightarrow\outputspace$ with corresponding vector of target outputs $\targetvals$ is a set $\accset$ such that
    \begin{enumerate}
        \item $\accset\subseteq\inputspace$
        \item $\forall \mathbf{x}\in  \inputspace$, there exists $\mathbf{x}_\accset\in\accset$ such that 
        $$\smashoperator{\sum_{y\in[\targetvals[\mathbf{x}]-\alpha,\targetvals[\mathbf{x}]+\alpha]}}P(A(\mathbf{x})=y)=$$ $$\smashoperator{\sum_{y\in[\targetvals[\mathbf{x}_\accset]-\alpha,\targetvals[\mathbf{x}_\accset]+\alpha]}}P(A(\mathbf{x}_\accset)=y).$$
    \end{enumerate}
\end{definition} 

We present our solution in Algorithm \ref{alg:acc}. Again, use the $\mathsf{INFERENCE}$ algorithm (Algorithm \ref{alg:inf}) as a subprocess to compute exact probabilistic inference via WMC.

\begin{algorithm}
    \caption{Accuracy Bound Synthesis}
    \label{alg:acc}
    \begin{algorithmic}[1]
    \renewcommand{\algorithmicrequire}{\textbf{Input:}}
    \renewcommand{\algorithmicensure}{\textbf{Output:}}
    \REQUIRE WBF $(\varphi,w)$ of $A:\inputspace\rightarrow\outputspace$, accuracy set \accset, vector of target outputs \targetvals{}, accuracy parameter $\alpha$, and inference set $\infset$ that contains all $(\mathbf{x},y)$ pairs such that $x\in\accset$ and $y\in[\mathcal{V}_\mathbf{x}-\alpha,\mathcal{V}_\mathbf{x}+\alpha]$
    \ENSURE Minimal probability $p$ and worst case input $\mathbf{acc}$
    \STATE $\probmat \gets \mathsf{INFERENCE}(\varphi, \infset)$
    \STATE $p\gets \infty$, $\mathbf{acc}\gets\varnothing$
    \FOR{$\mathbf{x}$ in ($\infset_{\inputspace}$)}
        \IF{$\sum_{y\in[\targetvals_\mathbf{x}-\alpha,\targetvals_\mathbf{x} + \alpha]}\probmat[\mathbf{x},y]<p$}
        \STATE $p \gets\sum_{y\in[\targetvals_\mathbf{x}-\alpha,\targetvals_\mathbf{x} + \alpha]}\probmat[\mathbf{x},y]$
        \STATE $\mathbf{acc}\gets \mathbf{x}$
        \ENDIF
    \ENDFOR
    \RETURN $p, \mathbf{acc}$
    \end{algorithmic}
\end{algorithm}

Again, we return to our binary randomized response example. Since we are considering accuracy, we have to wrap the raw output to compute some quantifiable query. For example, we can add a counting query wrapper, such that the output of $\rrcount^2(x_1,x_2)=sum(\rr^2(x_1,x_2))$. Here we set $\alpha=1,~\lambda=0.2,~\infset_{\rrcount^2}=
\{([0,0],0),$ $([1,1],0)
,$ $([0,0],1),$ $([1,0],1),$ $([1,1],1)
,$ $([1,0],2),$ $([1,1],2)
\}$, and $\accset_{\rrcount^2}=\{[0,0],[1,0],[1,1]\}$ where $\targetvals_{[0,0]}=0,\targetvals_{[1,0]}=1,\targetvals_{[1,1]}=2$.
In this case, the output of Algorithm \ref{alg:acc} would be $\mathbf{acc}=[0,1]$, $p=.96$. This means that $[1,0]$ is the worst case accuracy assignment for $\rrcount^2$ and $1-\beta=.96$ is a tight accuracy bound.

\subsubsection{Correctness and Complexity of Accuracy Bound Synthesis}
To verify correctness of our accuracy algorithm, we must show that the minimal accuracy probability found using the potentially restricted inference and accuracy set is the same as the minimal accuracy probability using exhaustive inference and accuracy sets.
\begin{thm}
    The output $\mathbf{acc}$ of Algorithm \ref{alg:acc} is a solution to the Accuracy Bound Synthesis problem.
\end{thm}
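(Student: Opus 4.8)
The plan is to mirror the correctness argument already given for Algorithm~\ref{alg:priv}, reducing the claim to Theorem~\ref{thm:acc_minimizer} together with the defining property of an accuracy set. First I would establish that the matrix $\probmat$ returned by the call to $\mathsf{INFERENCE}$ holds the correct probabilities on every entry the loop reads. By the input precondition of Algorithm~\ref{alg:acc}, the inference set $\infset$ contains every pair $(\mathbf{x},y)$ with $\mathbf{x}\in\accset$ and $y\in[\targetvals_\mathbf{x}-\alpha,\targetvals_\mathbf{x}+\alpha]$, and by the specification of Algorithm~\ref{alg:inf} each such entry satisfies $\probmat(\mathbf{x},y)=\wmc(\varphi\mid(\mathbf{x},y))$, which by the definition of a WBF of $A$ equals $\pr{A(\mathbf{x})=y}$. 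Hence every quantity $\sum_{y\in[\targetvals_\mathbf{x}-\alpha,\targetvals_\mathbf{x}+\alpha]}\probmat[\mathbf{x},y]$ computed in the loop is exactly $\sum_{y\in[\targetvals_\mathbf{x}-\alpha,\targetvals_\mathbf{x}+\alpha]}\pr{A(\mathbf{x})=y}$, so no summand is missing and the lookups are well-defined.

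Next I would argue that the minimization the loop performs over the inputs appearing in $\infset$ (which, under the stated precondition and as in the worked example, are exactly the elements of $\accset$) coincides with the minimization over the full input space $\inputspace$. Since the algorithm returns the $\mathbf{x}$ achieving $\min_{\mathbf{x}\in\accset}\sum_{y\in[\targetvals_\mathbf{x}-\alpha,\targetvals_\mathbf{x}+\alpha]}\pr{A(\mathbf{x})=y}$, I would invoke condition~2 of Definition~\ref{def:accset}: for every $\mathbf{x}\in\inputspace$ there is some $\mathbf{x}_\accset\in\accset$ whose accuracy sum equals that of $\mathbf{x}$. This shows the range of the accuracy-sum map is the same over $\accset$ as over $\inputspace$, so the minimum over $\accset$ equals the minimum over $\inputspace$ and is attained by $\mathbf{acc}$.

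Finally, I would close the argument by appealing to Theorem~\ref{thm:acc_minimizer}: the input $\mathbf{x}$ minimizing $\sum_{y\in[\targetvals_\mathbf{x}-\alpha,\targetvals_\mathbf{x}+\alpha]}\pr{A(\mathbf{x})=y}$ also minimizes $\pr{|A(\mathbf{x})-\targetvals_\mathbf{x}|\leq\alpha}$ over $\inputspace$. Combining the two previous steps, $\mathbf{acc}$ minimizes the accuracy probability over all of $\inputspace$ and is therefore a solution to the Accuracy Bound Synthesis problem. The computation is entirely parallel to the privacy case; the one place requiring care is the bookkeeping that ties the three reductions together---checking that the precondition on $\infset$ guarantees correctness of precisely the entries read, and that the set iterated over in the loop faithfully represents $\accset$---since the statement otherwise follows immediately from Definition~\ref{def:accset} and Theorem~\ref{thm:acc_minimizer}. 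I expect this bookkeeping, rather than any substantive inequality, to be the main obstacle.
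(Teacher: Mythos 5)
Your proposal is correct and follows essentially the same route as the paper's own proof: correctness of $\mathsf{INFERENCE}$ gives $\probmat[\mathbf{x},y]=\pr{A(\mathbf{x})=y}$ on every entry read, condition~2 of Definition~\ref{def:accset} lifts the minimum of the accuracy sum from $\accset$ to all of $\inputspace$, and Theorem~\ref{thm:acc_minimizer} converts minimizing that sum into minimizing $\pr{|A(\mathbf{x})-\targetvals_\mathbf{x}|\leq\alpha}$. If anything, your version is tidier: the paper's proof contains a stray sentence about maximizing the likelihood ratio (copied from the privacy-bound proof) that plays no role, whereas your bookkeeping about which entries of $\infset$ the loop reads is exactly the right thing to check.
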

\begin{proof}
    Let WBF $(\varphi,w)$ be the WBF of $A:\inputspace\rightarrow\outputspace$, $\targetvals$ be a vector of target outputs of $A$, $\alpha\geq 0$ and let $\infset$, $\accset$ be inference and accuracy sets of $A$ such that $\infset$ contains all $(\mathbf{x}_{\accset},y)$ where $\mathbf{x}_{\accset}\in\accset$ and $y\in[\mathcal{V}[\mathbf{x}_{\accset}]-\alpha,\mathcal{V}[\mathbf{x}_{\accset}]+\alpha]$. We know that $\mathbf{acc}$ minimizes $\sum_{y\in[\targetvals[\mathbf{x}_{\accset}]-\alpha,\targetvals[\mathbf{x}_{\accset}] + \alpha]}\probmat[\mathbf{x}_{\accset},y]$ by the definition of Algorithm \ref{alg:acc}, and that $\probmat[\mathbf{x}_{\accset},y]=\pr{A(\mathbf{x}_{\accset}) = y}$ by the definition of Algorithm \ref{alg:inf}. Therefore, $\mathbf{acc}$ minimizes $\pr{A(\mathbf{x}_{\accset}) = y}$ over $\mathbf{x}_{\accset}\in\accset$. By the definition of an accuracy set, for all $\mathbf{x}\in \inputspace$, there exists a $\mathbf{x}\in\accset$ such that ${\sum_{y\in[\targetvals[\mathbf{x}]-\alpha,\targetvals[\mathbf{x}]+\alpha]}}P(A(\mathbf{x})=y)={\sum_{y\in[\targetvals[\mathbf{x}_\accset]-\alpha,\targetvals[\mathbf{x}_\accset]+\alpha]}}P(A(\mathbf{x}_\accset)=y)$. Therefore, $c$ maximizes $\frac{\pr{A(\mathbf{x}) = y}}{\pr{A(\mathbf{x}') = y}}$ over all $(\mathbf{x},\mathbf{x}',y)\in \inputspace\times\inputspace\times\outputspace$. By Theorem \ref{thm:acc_minimizer}, this means that $\mathbf{acc}$ minimizes $Pr(|A(\mathbf{x}) - \targetvals_\mathbf{x}| \leq \alpha)$ over all $\mathbf{x}\in\inputspace$ and is therefore a solution to the Accuracy Bound Synthesis problem.
\end{proof}

Algorithm \ref{alg:acc} runs in $O(\mathsf{WMC}\cdot|\infset|+|\accset|\cdot\alpha)$.

Like with the privacy algorithm, there are two main proof requirements when we design a concrete inference set $\infset$ and accuracy set $\accset$ to use in the accuracy bound synthesis algorithm. These requirements are:
\begin{enumerate}
    \item $\accset$ satisfies the definition of a valid accuracy set in Definition \ref{def:accset} and
    \item all necessary input/output assignments are present in $\infset$, i.e. $\forall x\in\accset$ and $y\in[\mathcal{V}[\mathbf{x}]-\alpha,\mathcal{V}[\mathbf{x}]+\alpha]$,  $(\mathbf{x},y)\in\infset{}$.
\end{enumerate}

\section{Randomized Response Case Study}
\label{sec:rr}
We have been using an example of binary randomized response for $n=2$. In this section, we generalize this solution to any $n$ and provide a detailed explanation of how to utilize these tools for a tractable solution to the tight privacy and accuracy bound problem.

\begin{algorithm}
    \caption{Randomized Response (\rr)}
    \label{alg:rr}
    \begin{algorithmic}[1]
    \renewcommand{\algorithmicrequire}{\textbf{Input:}}
    \renewcommand{\algorithmicensure}{\textbf{Output:}}
    \REQUIRE Bit array $\mathbf{x}$ of true client messages. 
    \ENSURE Bit array $\mathbf{y}$ of randomized client messages.
    \STATE Initialize bit array $\mathbf{y}$ of length $|\mathbf{x}|$
    \FOR{$i\in\{1,\dots,|\mathbf{x}|\}$}
        \STATE $\mathbf{y}[i]\gets 1-\mathbf{x}[i]$ with probability $\lambda$, otherwise $\mathbf{x}[i]$.
    \ENDFOR
    \RETURN $\mathbf{y}$
    \end{algorithmic}
\end{algorithm}

In this section, we provide a WBF for \rr{} and show the complexity of the exhaustive solution. We then analyze the correctness of a solution to the privacy bound synthesis problem that leverages the inherent symmetries in \rr{} that runs linearly in $n$ (multiplied by $\wmctime$). We also solve the accuracy bound synthesis problem in time quadratic in $n$ (again multiplied by $\wmctime$).

\subsection{Weighted Boolean Formula for RR}
We can manually craft a WBF $(\varphi,w)$ for randomized response where $n$ is the number of clients. We set 
\begin{align}
    \varphi=\bigwedge\limits_{i=1}^n y_i \leftrightarrow ((\overline{\theta}_i\land x_i)\lor(\theta_i\land \overline{x}_i))    
\end{align}
where each $x_i$ corresponds to the bit value in the input vector of \rr{}, and the $y_i$'s likewise correspond with the output vectors. The $\theta$ values correspond with the coin flips in the randomized portion of the algorithm.

We set the weighting function to be $w(\theta_i)=1-\lambda$, or the probability that the value of $x_i$ does not flip, and $w(\overline{\theta}_i)=\lambda$ as the probability that $x_i$ does flip. For all other literals, we set $w(\cdot)=1$.

\begin{thm}
    $(\varphi,w)$ is a valid WBF of \rr.
\end{thm}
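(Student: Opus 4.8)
The plan is to verify the two clauses of the WBF definition directly: for each pair $(\mathbf{x},y)\in\inputspace\times\outputspace$, I would (i) exhibit a unique assignment $a$ realizing it, and (ii) show $\wmc(\varphi\mid a,w)=\pr{\rr(\mathbf{x})=y}$. For (i), fix $(\mathbf{x},y)$ and let $a$ be the assignment that sets each input literal $x_i$ and each output literal $y_i$ to the corresponding bit of $\mathbf{x}$ and $y$. Since the input/output variables $x_1,\dots,x_n,y_1,\dots,y_n$ are in bijection with the coordinates of $(\mathbf{x},y)$, this is the unique assignment over those variables realizing $(\mathbf{x},y)$, which discharges the ``exactly one assignment'' clause; the coin variables $\theta_1,\dots,\theta_n$ remain free and are marginalized by $\wmc$.

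The key structural observation for (ii) is that $\varphi=\bigwedge_{i=1}^n\varphi_i$ with $\varphi_i := y_i\leftrightarrow((\overline{\theta}_i\land x_i)\lor(\theta_i\land\overline{x}_i))$, and the conjuncts have pairwise disjoint variable supports $\{x_i,y_i,\theta_i\}$. Because of this disjointness the weighted model count factorizes, $\wmc(\varphi\mid a,w)=\prod_{i=1}^n\wmc(\varphi_i\mid a,w)$, mirroring the independence of the $n$ coin flips in \rr{}. It then suffices to analyze a single coordinate. Once $x_i$ and $y_i$ are fixed by $a$, the right-hand side of $\varphi_i$ is a function of $\theta_i$ alone (it collapses to $\overline{\theta}_i$ when $x_i=1$ and to $\theta_i$ when $x_i=0$), so the biconditional forces $\theta_i$ to the single value $x_i\lxor y_i$. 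Hence $\varphi_i\mid a$ has exactly one satisfying model, and its weighted count is simply the weight of the forced literal on $\theta_i$.

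To finish I would run the $2\times 2$ case check on $(x_i,y_i)$: a flip ($x_i\neq y_i$) and a retain ($x_i=y_i$) each force a definite $\theta_i$-literal, and one verifies that under $w$ the flip case contributes $\lambda$ and the retain case contributes $1-\lambda$, matching the per-coordinate transition probability of Algorithm \ref{alg:rr}. Multiplying the $n$ per-coordinate weights and invoking independence then recovers exactly $\pr{\rr(\mathbf{x})=y}$. I expect the main obstacle to be bookkeeping rather than any hard calculation: I must confirm that the polarity convention baked into $\varphi$ (whether $\theta_i$ true encodes a flip or a retain) is aligned with the weight assignments in $w$, so that a flip really does receive weight $\lambda$ and a retain $1-\lambda$; and I must justify the factorization of $\wmc$ across the disjoint conjuncts, since it is precisely this step that reduces the argument to one coordinate and reassembles the product form of the \rr{} probability.
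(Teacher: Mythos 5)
Your proposal is correct and takes essentially the same route as the paper's proof: fix $(\mathbf{x},\mathbf{y})$, observe that each conjunct forces $\theta_i$ to the single value $x_i\lxor y_i$, conclude there is exactly one model, and multiply the weights of the forced $\theta$-literals to recover $\pr{\rr(\mathbf{x})=\mathbf{y}}$. Your explicit factorization of the weighted count over the variable-disjoint conjuncts is a slightly cleaner justification of the product step, which the paper asserts directly from uniqueness of the model.

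One point worth making explicit: the polarity alignment you flag as ``bookkeeping'' is in fact a genuine inconsistency in the paper as written. As you correctly derive, $\varphi$ forces $\theta_i$ true exactly when coordinate $i$ flips, yet the stated weight function sets $w(\theta_i)=1-\lambda$ (described as the no-flip probability) and $w(\overline{\theta}_i)=\lambda$. Taken literally, this yields $\wmc = (1-\lambda)^{\#\text{flips}}\,\lambda^{\#\text{non-flips}}$, which is the wrong probability; the paper's own proof silently uses the swapped weights $w(\theta_i)=\lambda$, $w(\overline{\theta}_i)=1-\lambda$ in its final computation to land on the correct value $(1-\lambda)^{\#\{i:\,\mathbf{x}[i]=\mathbf{y}[i]\}}\,\lambda^{\#\{i:\,\mathbf{x}[i]\neq\mathbf{y}[i]\}}$. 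So the verification you defer is exactly the step that fails under the definitions as stated: either the weights or the polarity of $\theta_i$ in $\varphi$ must be swapped, and once that typo-level fix is made your argument goes through verbatim.
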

\begin{proof}
    Let $(\mathbf{x},\mathbf{y})\in\inputspace\times\outputspacerr$. For each $i\in\{1,\dots,n\}$ assign the $x_i$ and $y_i$ literals in $\varphi$ the values of $\mathbf{x}[i]$ and $\mathbf{y}[i]$, and $\overline{x}_i=1-\mathbf{x}[i]$. If $\mathbf{x}[i]=\mathbf{y}[i]$, then for $y_i \leftrightarrow ((\overline{\theta}_i\land x_i)\lor(\theta_i\land \overline{x}_i))$ to be satisfied, $\theta_i=0$. If $\mathbf{x}[i]\not=\mathbf{y}[i]$, then for $y_i \leftrightarrow ((\overline{\theta}_i\land x_i)\lor(\theta_i\land \overline{x}_i))$ to be satisfied, $\theta_i=1$. 
    
    This is the only satisfying assignment of $\varphi$ for $x,~y$ assigned as described, therefore there is only one model, and the weighted model count is therefore $\prod_{\ell\in m} w(\ell)$. 
    
    The product of weights of literals is $w(x_i)\cdot w(y_i)\cdot w(\overline{\theta}_i)=1\cdot1\cdot (1-\lambda)$ for $i$ s.t. $\mathbf{x}[i]=\mathbf{y}[i]$ and $w(x_j)\cdot w(y_j)\cdot w(\theta_j)=1\cdot1\cdot \lambda$ for $i$ s.t. $\mathbf{x}[i]\not=\mathbf{y}[i]$, so $\wmc((\varphi,w))=(1-\lambda)^{\# i \text{ s.t. } \mathbf{x}[i]=\mathbf{y}[i]}\lambda^{\# i \text{ s.t. } \mathbf{x}[i]\not=\mathbf{y}[i]}=\pr{\rr(\mathbf{x})=\mathbf{y}}.$
\end{proof}

We provide this analysis to demonstrate what a valid WBF for \rr{} would look, however, as discussed in Section \ref{sec:wmc_for_synthesis}, this is not necessarily the best WBF for computing the WMC of $\rr$. In Section \ref{sec:eval} we use a probabilistic programming language to find the WBF, compile to a compact BDD, and efficiently query the WMC.

\subsection{Exhaustive Solution}
It is evident that for all algorithms, setting $\infset_{\rr}=\inputspace\times\outputspace$, $\compset=\inputspace\times\inputspace\times\outputspace$, and $\accset_{\rr}=\inputspace$ gives the correct bounds for both privacy and accuracy because we are optimizing the bounds over all possible inputs and outputs. However, because $|\inputspace|=2^n$ and $|\outputspace|=2^n$ for randomized response, when we analyze this solution, we see that the runtime of the inference algorithm is $O(\wmctime{} \cdot|\infset_{\rr}|)=O(\wmctime{} \cdot 4^n)$, the runtime of the privacy algorithm is $O(\wmctime{} \cdot|\infset_{\rr}|\cdot|\compset|)=O(\wmctime{}\cdot 4^n+16^n)$ and the runtime of the accuracy algorithm is $O(\wmctime{}\cdot|\infset_{\rr}| + |\accset_{\rr}|\cdot\alpha)=O(\wmctime{} \cdot 4^n + 2^n\cdot\alpha)$. We therefore need to find a way to improve this runtime for randomized response.

\subsection{Leveraging Symmetries for Finding Privacy Bound}
Because of the independence properties between different clients responses, there are many inherent symmetries in this algorithm. In this section, we identify an inference and privacy set which satisfy the coverage properties from the previous section, and reduce the complexity by orders of magnitude.

\subsubsection{Counting Flips}
We identify one significant symmetry in \rr{} which is the number of clients whose bit flips. In other words, we can find a representative input/output pair for each number of bit flips that occurs.

\begin{lemma}
    \label{lem:flips}
    Given $i\in\{1,\dots,n\}$, $\forall \mathbf{x}\in\inputspace$ and $\mathbf{y}\in\outputspacerr$ such that $count(\mathbf{x}\lxor\mathbf{y})=i$, $\pr{A(\mathbf{x})=\mathbf{y}}=\pr{A(1^i0^{n-i})=0^n}$.
\end{lemma}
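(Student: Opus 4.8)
The plan is to show that the probability $\pr{\rr(\mathbf{x})=\mathbf{y}}$ depends only on the number of bit positions where $\mathbf{x}$ and $\mathbf{y}$ differ, and not on the particular positions or values involved. Since randomized response processes each client bit independently, the joint probability factors as a product over the $n$ coordinates. For each coordinate $i$, the factor is $1-\lambda$ when $\mathbf{x}[i]=\mathbf{y}[i]$ (no flip) and $\lambda$ when $\mathbf{x}[i]\neq\mathbf{y}[i]$ (a flip). This is exactly the content established in the proof that $(\varphi,w)$ is a valid WBF of \rr{}, so I would invoke that result directly.

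First I would observe that the number of coordinates where $\mathbf{x}[i]\neq\mathbf{y}[i]$ is precisely $count(\mathbf{x}\lxor\mathbf{y})$, where $\lxor$ is the bitwise XOR: the XOR has a $1$ in exactly the positions where the two vectors disagree. Given the hypothesis $count(\mathbf{x}\lxor\mathbf{y})=i$, there are exactly $i$ flips and $n-i$ non-flips. By the factorization, this gives
\begin{equation}
    \pr{\rr(\mathbf{x})=\mathbf{y}}=(1-\lambda)^{n-i}\lambda^{i}.
\end{equation}

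Next I would evaluate the right-hand side of the claimed identity. The representative pair is input $\mathbf{x}'=1^i0^{n-i}$ and output $\mathbf{y}'=0^n$. Comparing these coordinatewise, the first $i$ positions disagree (input $1$, output $0$) and the last $n-i$ positions agree (both $0$), so $count(\mathbf{x}'\lxor\mathbf{y}')=i$ as well. Applying the same factorization yields $\pr{\rr(1^i0^{n-i})=0^n}=(1-\lambda)^{n-i}\lambda^{i}$, which matches, completing the argument.

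I do not expect a genuine obstacle here, since the result is essentially a restatement of the WBF correctness proof specialized to a canonical representative. The only point requiring mild care is the bookkeeping connecting $count(\mathbf{x}\lxor\mathbf{y})$ to the number of flips and confirming the chosen representative $1^i0^{n-i}\mapsto 0^n$ realizes exactly $i$ disagreements; both are immediate from the definition of XOR. A stylistic subtlety is that the lemma is stated for the generic algorithm $A$ but is really about \rr{} specifically, so I would make explicit that $A=\rr$ throughout.
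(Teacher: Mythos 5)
Your proposal is correct and follows essentially the same route as the paper: both reduce the claim to the product formula $\pr{\rr(\mathbf{x})=\mathbf{y}}=(1-\lambda)^{\#\text{agreements}}\lambda^{\#\text{disagreements}}$ (established in the WBF validity argument) and note that $count(\mathbf{x}\lxor\mathbf{y})=i$ forces both sides to equal $(1-\lambda)^{n-i}\lambda^{i}$. Your version is marginally more explicit in verifying that the canonical pair $(1^i0^{n-i},0^n)$ realizes exactly $i$ disagreements, a step the paper leaves implicit.
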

\begin{proof}
    Let $\mathbf{x}\in\inputspace$ and $\mathbf{y}\in\outputspacerr$ such that $count(\mathbf{x}\lxor\mathbf{y})=i$. This means that there are $i$ entries such that $\mathbf{x}[i]\not=\mathbf{y}[i]$. Therefore, 
    $\pr{\rr(\mathbf{x})=\mathbf{y}}=(1-\lambda)^{\# i \text{ s.t. } \mathbf{x}[i] =\mathbf{y}[i]}\lambda^{\# i \text{ s.t. } \mathbf{x}[i] \not=\mathbf{y}[i]}=\pr{A(1^i0^{n-i})=0^n}.$
\end{proof}
We also identify a key fact about the relationship between the number of bit flips in neighboring inputs, specifically that a neighboring input has either one more or one less bit flip with respect to the output. 
\begin{lemma}
    \label{lem:rrneigh}
    For neighboring inputs $\mathbf{x},\mathbf{x}'\in\inputspace$, $count(\mathbf{x}'\lxor y)=count(\mathbf{x}\lxor\mathbf{y})+1$ or $count(\mathbf{x}\lxor \mathbf{y})-1$.
 \end{lemma}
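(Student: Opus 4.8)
The plan is to exploit the definition of neighboring inputs directly. Since $\mathbf{x}$ and $\mathbf{x}'$ are neighbors, they are bit arrays that agree in every coordinate except exactly one, say position $j$, where $\mathbf{x}'[j] = 1 - \mathbf{x}[j]$. Because the XOR is applied coordinatewise, I would first observe that $\mathbf{x}\lxor\mathbf{y}$ and $\mathbf{x}'\lxor\mathbf{y}$ agree in every coordinate except $j$: for each $i \neq j$ we have $\mathbf{x}[i] = \mathbf{x}'[i]$, and hence $(\mathbf{x}\lxor\mathbf{y})[i] = (\mathbf{x}'\lxor\mathbf{y})[i]$. This isolates the entire difference between the two XOR vectors to the single coordinate $j$.

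Next I would analyze position $j$ alone. Using $\mathbf{x}'[j] = 1 - \mathbf{x}[j]$ together with the bitwise identity $(1-b)\lxor c = 1 - (b \lxor c)$, it follows that $(\mathbf{x}'\lxor\mathbf{y})[j] = 1 - (\mathbf{x}\lxor\mathbf{y})[j]$; in other words, flipping the single input bit flips precisely the $j$-th bit of the XOR vector. Splitting into the two cases then finishes the local analysis: if $(\mathbf{x}\lxor\mathbf{y})[j] = 0$ the $j$-th XOR bit turns on (so the count increases by one), and if $(\mathbf{x}\lxor\mathbf{y})[j] = 1$ it turns off (so the count decreases by one).

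Finally, since $count$ counts the number of ones and the two XOR vectors differ only in coordinate $j$, the overall count differs by exactly the change at $j$, giving $count(\mathbf{x}'\lxor\mathbf{y}) = count(\mathbf{x}\lxor\mathbf{y}) + 1$ or $count(\mathbf{x}\lxor\mathbf{y}) - 1$. I do not anticipate a genuine obstacle, as this is essentially the observation that a single-bit flip changes Hamming weight by $\pm 1$; the only point to state cleanly is the bit-flip identity, and to confirm that the neighbor relation on binary inputs forces a single-coordinate \emph{bit flip} rather than an arbitrary single-coordinate change, which for bit arrays coincide.
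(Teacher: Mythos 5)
Your proof is correct and takes the same route as the paper, which simply asserts that the lemma ``follows directly from the definition of neighboring inputs'' without spelling out the details. Your write-up fills in exactly that direct argument --- a single-coordinate bit flip changes precisely one coordinate of $\mathbf{x}\lxor\mathbf{y}$ and hence changes the Hamming weight by $\pm 1$ --- so it is a faithful, fully detailed version of the paper's intended proof.
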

 Lemma \ref{lem:rrneigh} follows directly from the definition of neighboring inputs.

\subsubsection{Constructing Privacy Set}
Because we have these inherent symmetries in the number of flips between each input/output pair, we can define the privacy set to be 
\begin{align}
    \compset_{\rr}=&\{(1^i0^{n-i},1^{i+1}0^{n-(i+1)},0^n)\}_{i\in\{0,\dots,n-1\}} \cup \\ &\{(1^{i}0^{n-i},1^{i-1}0^{n-(i-1)},0^n)\}_{i\in\{1,\dots,n\}}.
\end{align}

We note that $|\compset_{\rr}|=2n$.

\begin{thm}
    $\compset_{\rr}$ is a valid privacy set for $\rr$.
\end{thm}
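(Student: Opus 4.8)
The plan is to verify both conditions from Definition \ref{def:compset}. The first condition is immediate by construction: every tuple in $\compset_{\rr}$ is of the form $(1^i0^{n-i}, 1^{i\pm1}0^{n-(i\pm1)}, 0^n)$, and each entry is drawn from $\inputspace$ or $\outputspace$, so $\compset_{\rr}\subseteq\inputspace\times\inputspace\times\outputspace$. The real work is the second condition: showing that $\compset_{\rr}$ covers all likelihood ratios over neighboring inputs. So I would fix an arbitrary triple $(\mathbf{x},\mathbf{x}',y)\in\inputspace\times\inputspace\times\outputspace$ with $\mathbf{x},\mathbf{x}'$ neighbors, and produce a matching tuple in $\compset_{\rr}$ whose ratio agrees.

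The key idea is to reduce everything to flip-counts using Lemma \ref{lem:flips}. First I would set $j=count(\mathbf{x}\lxor y)$ and $j'=count(\mathbf{x}'\lxor y)$. By Lemma \ref{lem:flips}, $\pr{A(\mathbf{x})=y}=\pr{A(1^j0^{n-j})=0^n}$ and $\pr{A(\mathbf{x}')=y}=\pr{A(1^{j'}0^{n-j'})=0^n}$, so the numerator and denominator of the likelihood ratio depend only on $j$ and $j'$ respectively. Hence
\begin{equation*}
\frac{\pr{A(\mathbf{x})=y}}{\pr{A(\mathbf{x}')=y}}=\frac{\pr{A(1^j0^{n-j})=0^n}}{\pr{A(1^{j'}0^{n-j'})=0^n}}.
\end{equation*}
Then I would invoke Lemma \ref{lem:rrneigh}, which tells me $j'=j+1$ or $j'=j-1$. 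In the first case the ratio equals that of the tuple $(1^j0^{n-j},1^{j+1}0^{n-(j+1)},0^n)$, which lies in the first family of $\compset_{\rr}$ provided $j\in\{0,\dots,n-1\}$; in the second case it equals that of $(1^j0^{n-j},1^{j-1}0^{n-(j-1)},0^n)$, which lies in the second family provided $j\in\{1,\dots,n\}$. Since $j=count(\mathbf{x}\lxor y)\in\{0,\dots,n\}$ always, I would check that whenever $j'=j+1$ is forced we indeed have $j\le n-1$ (else $j'=n+1$ is impossible), and symmetrically $j\ge1$ when $j'=j-1$; these range constraints follow automatically because $j'\in\{0,\dots,n\}$ too.

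The main obstacle, and the point needing the most care, is the edge cases at the boundary flip-counts $j=0$ and $j=n$, where only one of the two directions in Lemma \ref{lem:rrneigh} is realizable and one must confirm the relevant family of $\compset_{\rr}$ contains the needed index. A secondary subtlety is confirming that Lemma \ref{lem:flips}'s representative $1^i0^{n-i}\mapsto 0^n$ genuinely has the same flip-count as the arbitrary pair it stands in for, i.e. that $count((1^i0^{n-i})\lxor 0^n)=i$, which is immediate but worth stating so the ratio substitution is airtight. Once the representatives are matched by flip-count and the neighbor relation pins down the $\pm1$ offset, the two ratios are literally equal rather than merely close, so no approximation is involved and the coverage condition holds.
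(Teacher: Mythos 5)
Your proposal is correct and follows essentially the same route as the paper's own proof: reduce the likelihood ratio to flip-count representatives via Lemma \ref{lem:flips}, then use Lemma \ref{lem:rrneigh} to split into the $j'=j+1$ and $j'=j-1$ cases and match each to the corresponding family of $\compset_{\rr}$. Your explicit verification of the boundary index constraints ($j\le n-1$ when $j'=j+1$, $j\ge 1$ when $j'=j-1$) is a detail the paper's proof leaves implicit, but it does not change the argument.
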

\begin{proof}
    Let $(\mathbf{x},\mathbf{x}',\mathbf{y})\in\inputspace\times\inputspace\times\outputspacerr$ such that $\mathbf{x},\mathbf{x}'$ are neighbors. 
    
    If $count(\mathbf{x}'\lxor\mathbf{y})=count(\mathbf{x}\lxor\mathbf{y})+1$, then by Lemma \ref{lem:flips} with $count(\mathbf{x}\lxor\mathbf{y})=i$ and $count(\mathbf{x}'\lxor\mathbf{y})=i+1$, $\likelyratio{}=\frac{\pr{A(1^i0^{n-i})=0^n}}{\pr{A(1^{i+1}0^{n-(i+1)})=0^n}}$. By definition, $(1^i0^{n-i},1^{i+1}0^{n-(i+1)},0^n)$ is in $\compset_{\rr}$.

    If $count(\mathbf{x}'\lxor\mathbf{y})=count(\mathbf{x}\lxor\mathbf{y})-1$, then by Lemma \ref{lem:flips} with $count(\mathbf{x}\lxor\mathbf{y})=i$ and $count(\mathbf{x}'\lxor\mathbf{y})=i-1$, $\likelyratio{}=\frac{\pr{A(1^i0^{n-i})=0^n}}{\pr{A(1^{i-1}0^{n-(i-1)})=0^n}}$. By definition, $(1^i0^{n-i},1^{i-1}0^{n-(i-1)},0^n)$ is in $\compset_{\rr}$.

    By Lemma \ref{lem:rrneigh}, this covers all possible cases, and by definition of $\compset_{\rr}$, $\forall(\mathbf{x},\mathbf{x}',\mathbf{y})\in \compset_{\rr}$, $(\mathbf{x},\mathbf{x}',\mathbf{y})\in \inputspace\times\inputspace\times\outputspace$. Therefore $\compset_{\rr}$ is a valid privacy set.
\end{proof}

We can now build a smaller inference set that computes all necessary probabilities that are used in the privacy set. In this case we define 
\begin{equation}
    \infset_{\rr}=\{(1^i0^{n-i},0^n)\}_{i\in\{0,\dots,n\}}.
\end{equation}

\begin{thm}
    $\forall (\mathbf{x}_{\compset_\rr},\mathbf{x}_{\compset_\rr}',y_{\compset_\rr})\in {\compset_\rr}$, $(\mathbf{x}_{\compset_\rr}, y_{\compset_\rr})\in \infset_{\rr}$ and $(\mathbf{x}_{\compset_\rr}', y_{\compset_\rr})\in \infset_{\rr}$
\end{thm}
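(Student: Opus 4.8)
The plan is to verify the containment by a direct case split over the two index families whose union defines $\compset_\rr$. The guiding observation is that, across both families, every string occurring as either the first input $\mathbf{x}_{\compset_\rr}$ or the neighbor $\mathbf{x}'_{\compset_\rr}$ has the canonical ``prefix'' form $1^j 0^{n-j}$, and the output coordinate $y_{\compset_\rr}$ is uniformly $0^n$. Because $\infset_\rr$ is precisely the set of pairs $(1^j 0^{n-j}, 0^n)$ with $j \in \{0, \dots, n\}$, establishing membership reduces entirely to confirming that the index $j$ carried by each such string lies in $\{0, \dots, n\}$.

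Concretely, I would take the first family $\{(1^i 0^{n-i}, 1^{i+1} 0^{n-(i+1)}, 0^n)\}_{i \in \{0, \dots, n-1\}}$ and observe that the first input has index $i \in \{0, \dots, n-1\}$ and the neighbor has index $i+1 \in \{1, \dots, n\}$; both ranges are subsets of $\{0, \dots, n\}$, so when paired with the output $0^n$ both $(\mathbf{x}_{\compset_\rr}, y_{\compset_\rr})$ and $(\mathbf{x}'_{\compset_\rr}, y_{\compset_\rr})$ belong to $\infset_\rr$. I would then treat the second family $\{(1^i 0^{n-i}, 1^{i-1} 0^{n-(i-1)}, 0^n)\}_{i \in \{1, \dots, n\}}$ symmetrically: its first input carries index $i \in \{1, \dots, n\}$ and its neighbor carries index $i-1 \in \{0, \dots, n-1\}$, again both within $\{0, \dots, n\}$. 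Since these two families exhaust $\compset_\rr$, this settles every tuple.

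There is no substantive obstacle here; the statement is a bookkeeping check that the $\pm 1$ index shifts appearing in the neighbor coordinate of $\compset_\rr$ never push the index outside the range $\{0, \dots, n\}$ tracked by $\infset_\rr$. The only point meriting a moment's care is the behavior at the endpoints — $i = n-1$ in the first family (giving neighbor index $n$) and $i = 1$ in the second (giving neighbor index $0$) — but both stay in bounds by construction, so the verification goes through uniformly.
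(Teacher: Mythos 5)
Your proof is correct and follows essentially the same route as the paper's: a direct case split over the two index families defining $\compset_\rr$, observing that every first and second coordinate has the form $1^j0^{n-j}$ with $j\in\{0,\dots,n\}$ and thus pairs with $0^n$ to give an element of $\infset_\rr$. Your version is in fact slightly more careful than the paper's, since you explicitly verify that the $\pm 1$ index shifts stay within $\{0,\dots,n\}$ at the endpoints, a detail the paper's proof leaves implicit.
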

\begin{proof}
    Let $(\mathbf{x}_{\compset_\rr},\mathbf{x}_{\compset_\rr}',y_{\compset_\rr})\in \compset_\rr$ and $i\in\{0,1,\dots,n\}$. By the definition of $\compset_\rr$, $\mathbf{x}$ is $(1^i0^{n-i},1^{i+1}0^{n-(i+1)},0^n)$ or $(1^{i}0^{n-i},1^{i-1}0^{n-(i-1)},0^n)$. Therefore, $(\mathbf{x}_{\compset_\rr}, y_{\compset_\rr})\in \infset_{\rr}$ and $(\mathbf{x}_{\compset_\rr}', y_{\compset_\rr})\in \infset_{\rr}$ as desired.
\end{proof}

By the previous theorems, the privacy bound algorithm is correct for $\infset_{\rr}$ and $\compset_{\rr}$. 

\subsubsection{Complexity}
Since $|\infset_{\rr}|=n+1$ and $|\compset_{\rr}|=2n$, the inference algorithm (Algorithm \ref{alg:inf}) runs in $O(\wmctime\cdot n)$ time and the privacy bound algorithm (Algorithm \ref{alg:priv}) runs in $O(\wmctime\cdot n)$ time as well.


\subsection{Computing Accuracy}
For accuracy computation, we consider the counting query on \rr{}. Here, we take the output vector $\mathbf{y}_i$ of \rr{} and compute the number of $1$'s in that vector. In this case, the input space is $\inputspace=\{0,1\}^n$ and output space is $\{0,1,\dots,n\}$. We refer to this counting version of \rr{} as \rrcount. Here, the set $\targetvals$ is equivalent to the counting function over Booleans, $count:\{0,1\}^n\rightarrow\{0,\dots,n\}$.

Here too we have a key lemma about the symmetries.
\begin{lemma}
    \label{lem:accsym}
    Given $i\in\{0,\dots,n\}$, $\forall y\in\{0,\dots,n\}$, for any $\mathbf{x}$ such that $count(\mathbf{x})=i$, it is true that $\pr{\rrcount(\mathbf{x})=y}=\pr{\rrcount(1^i0^{n-i})=y}$.
\end{lemma}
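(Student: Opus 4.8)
The plan is to reduce the statement about the counting query $\rrcount$ to the per-vector result already established in Lemma~\ref{lem:flips}, by marginalizing over all output vectors with a fixed number of ones. For any input $\mathbf{x}$ the event $\{\rrcount(\mathbf{x}) = y\}$ is exactly the disjoint union of the events $\{\rr(\mathbf{x}) = \mathbf{y}\}$ over output bit-vectors $\mathbf{y}$ containing $y$ ones, so I would begin by writing
\[
\pr{\rrcount(\mathbf{x}) = y} = \sum_{\mathbf{y}\,:\,count(\mathbf{y}) = y} \pr{\rr(\mathbf{x}) = \mathbf{y}}.
\]

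Next, fix $\mathbf{x}$ with $count(\mathbf{x}) = i$. By Lemma~\ref{lem:flips}, each summand $\pr{\rr(\mathbf{x}) = \mathbf{y}}$ depends on $\mathbf{y}$ only through the number of flipped bits $k = count(\mathbf{x} \lxor \mathbf{y})$, and in fact equals $(1-\lambda)^{n-k}\lambda^{k}$. I would therefore regroup the sum by the flip count $k$, obtaining
\[
\pr{\rrcount(\mathbf{x}) = y} = \sum_{k} N_{i}(y,k)\,(1-\lambda)^{n-k}\lambda^{k},
\]
where $N_i(y,k)$ is the number of output vectors $\mathbf{y}$ with $count(\mathbf{y}) = y$ and $count(\mathbf{x}\lxor\mathbf{y}) = k$. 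The crux is then to show that $N_i(y,k)$ is independent of the positions of the ones in $\mathbf{x}$, depending only on $i, y, k, n$.

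The key combinatorial step partitions the coordinates of $\mathbf{x}$ into its $i$ one-positions and its $n-i$ zero-positions. An output vector arises by choosing $a$ of the one-positions to flip to $0$ and $b$ of the zero-positions to flip to $1$; this produces $a+b$ flips and $(i-a)+b$ output ones. Hence $N_i(y,k)$ is the sum of $\binom{i}{a}\binom{n-i}{b}$ over pairs $(a,b)$ satisfying $a + b = k$ and $i - a + b = y$, a quantity manifestly depending only on $i, y, k, n$. Since the canonical input $1^i 0^{n-i}$ also has weight $i$, the two sums coincide term by term, yielding $\pr{\rrcount(\mathbf{x}) = y} = \pr{\rrcount(1^i 0^{n-i}) = y}$.

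I expect the main obstacle to be the bookkeeping in this final counting step, namely checking that the constraints $a+b=k$ and $i-a+b=y$ pin down the multiplicities without any reference to coordinate positions. An alternative, slicker route avoids the explicit count: since $count(\mathbf{x}) = count(1^i0^{n-i}) = i$, there is a coordinate permutation $\pi$ with $\pi(1^i0^{n-i}) = \mathbf{x}$, and because $\rr$ applies i.i.d.\ flips coordinatewise its law is permutation-equivariant, so that composing with the permutation-invariant map $count$ gives the claim directly. With that approach the obstacle shifts to rigorously justifying the distributional equivariance of $\rr$, which I would establish via the explicit bijection on flip patterns induced by $\pi$.
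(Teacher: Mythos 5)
Your proof is correct and follows the same skeleton as the paper's: both begin by writing $\pr{\rrcount(\mathbf{x})=y}$ as the marginal sum of $\pr{\rr(\mathbf{x})=\mathbf{y}_{\rr}}$ over output vectors $\mathbf{y}_{\rr}$ with $count(\mathbf{y}_{\rr})=y$. The difference is what happens after that: the paper's proof is a single line that asserts this sum equals $\pr{\rrcount(1^i0^{n-i})=y}$ without further argument, whereas you actually establish that step --- regrouping by flip count $k$ using Lemma~\ref{lem:flips}, and showing via the $\binom{i}{a}\binom{n-i}{b}$ count (subject to $a+b=k$ and $i-a+b=y$) that the multiplicity $N_i(y,k)$ depends only on $i,y,k,n$ and not on where the ones sit in $\mathbf{x}$. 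So your write-up supplies exactly the justification the paper elides; either your combinatorial count or your alternative permutation-equivariance argument (a coordinate permutation carries $1^i0^{n-i}$ to $\mathbf{x}$, the i.i.d.\ flips are exchangeable, and $count$ is permutation-invariant) would close that gap, and the permutation route is arguably the cleaner of the two. One small simplification: the constraints $a+b=k$ and $i-a+b=y$ pin down $(a,b)$ uniquely (namely $a=(k+i-y)/2$ when this is a nonnegative integer in range, and $N_i(y,k)=0$ otherwise), so the bookkeeping you flagged as the main obstacle reduces to a single term rather than a genuine sum.
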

\begin{proof}
    Let $\mathbf{x}\in\inputspace$, $y\in\outputspace$ such that $count(\mathbf{x})=i$. By the definition of $\rrcount$, $\pr{\rrcount(\mathbf{x})=y}=\sum_{y_{\rr}\text{ s.t. } count(y_{\rr})=y}\pr{\rr(x)=y_{\rr}}=$ $\pr{\rrcount(1^i0^{n-i})=y}$.
\end{proof}

\subsubsection{Constructing $\accset$} We define the accuracy set to be 
\begin{equation}
    \accset_{\rrcount}=\{1^i0^{n-i}\}_{i\in \{0,\dots,n\}}
\end{equation}

\begin{thm}
    $\accset_{\rrcount}$ is an accuracy set for $\rrcount$.
\end{thm}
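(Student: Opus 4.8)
The plan is to verify the two defining conditions of an accuracy set (Definition \ref{def:accset}) for the proposed $\accset_{\rrcount}=\{1^i0^{n-i}\}_{i\in\{0,\dots,n\}}$. The first condition, $\accset_{\rrcount}\subseteq\inputspace$, is immediate: each string $1^i0^{n-i}$ is a length-$n$ bit vector, hence an element of $\inputspace=\{0,1\}^n$. The substantive work is the second condition, which requires that for every $\mathbf{x}\in\inputspace$ there is some $\mathbf{x}_\accset\in\accset_{\rrcount}$ achieving the same accuracy sum $\sum_{y\in[\targetvals[\mathbf{x}]-\alpha,\targetvals[\mathbf{x}]+\alpha]}P(A(\mathbf{x})=y)$.

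First I would fix an arbitrary $\mathbf{x}\in\inputspace$ and let $i=count(\mathbf{x})$, so that $i\in\{0,\dots,n\}$ and the canonical representative $\mathbf{x}_\accset=1^i0^{n-i}$ lies in $\accset_{\rrcount}$. The key observation is that the target value depends only on the number of ones: under the counting query, $\targetvals[\mathbf{x}]=count(\mathbf{x})=i=count(1^i0^{n-i})=\targetvals[\mathbf{x}_\accset]$, so the two summation ranges $[\targetvals[\mathbf{x}]-\alpha,\targetvals[\mathbf{x}]+\alpha]$ and $[\targetvals[\mathbf{x}_\accset]-\alpha,\targetvals[\mathbf{x}_\accset]+\alpha]$ coincide identically. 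I would then apply Lemma \ref{lem:accsym}, which gives $\pr{\rrcount(\mathbf{x})=y}=\pr{\rrcount(1^i0^{n-i})=y}$ for every individual $y\in\{0,\dots,n\}$. Summing this equality term-by-term over the shared range $[\,i-\alpha,\,i+\alpha\,]$ yields
\begin{equation}
\smashoperator{\sum_{y\in[\targetvals[\mathbf{x}]-\alpha,\targetvals[\mathbf{x}]+\alpha]}}P(A(\mathbf{x})=y)=\smashoperator{\sum_{y\in[\targetvals[\mathbf{x}_\accset]-\alpha,\targetvals[\mathbf{x}_\accset]+\alpha]}}P(A(\mathbf{x}_\accset)=y),
\end{equation}
which is exactly condition (2). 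Since $\mathbf{x}$ was arbitrary, this establishes the claim.

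The proof is essentially a direct assembly of two facts, so there is no deep obstacle; the main subtlety to state carefully is that the two summation intervals are genuinely the same set of $y$-values, which is what licenses the term-by-term summation of Lemma \ref{lem:accsym}. This hinges on $\targetvals$ being the counting function, so that inputs with equal Hamming weight share both their target value and (by the lemma) their entire output distribution under $\rrcount$. I would make the reliance on $\targetvals[\mathbf{x}]=count(\mathbf{x})$ explicit, since that is the only place the specific structure of the target vector enters, and it is what guarantees the intervals align rather than merely having equal length.
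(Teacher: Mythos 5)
Your proof is correct and follows essentially the same route as the paper's: both fix $\mathbf{x}$ with $count(\mathbf{x})=i$, invoke Lemma \ref{lem:accsym} for the pointwise equality of output probabilities, and use $count(\mathbf{x})=count(1^i0^{n-i})$ to conclude the two summation intervals coincide so the sums agree. Your write-up is somewhat more explicit about why the intervals align (the target being the counting function), but the underlying argument is identical.
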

\begin{proof}
    Let $\mathbf{x}\in  \inputspace$ and $count(\mathbf{x})=i$. By Lemma \ref{lem:accsym}, $\pr{\rrcount(\mathbf{x})=y}=\pr{\rrcount(1^i0^{n-i})=y}$. Since $count(\mathbf{x})=count(1^i0^{n-i})$, $\sum_{y\in[count(\mathbf{x})-\alpha,count(\mathbf{x})+\alpha]}\pr{\rrcount(\mathbf{x})=y}=$\\
    $\sum_{y\in[count(1^i0^{n-i})-\alpha,count(1^i0^{n-i})+\alpha]}\pr{\rrcount(\mathbf{x}_\accset)=y}.$

    By definition, $\forall\mathbf{x}\in \accset_{\rrcount}$, $\mathbf{x}\in \inputspace$. Therefore $\compset_{\rrcount}$ is a valid accuracy set.
\end{proof}

We define the inference set to accommodate all necessary values.
\begin{equation}
    \infset_{\rrcount}=\bigcup_{i\in[0,n]}\bigcup_{j\in[i-\alpha,i+\alpha]}\{(1^i0^{n-i},j)\}
\end{equation}

\begin{thm}
    $\infset_{\rrcount}$ contains all $(\mathbf{x},y)$ pairs such that $\mathbf{x}\in\accset_{\rrcount}$ and $y\in[\mathcal{V}[\mathbf{x}]-\alpha,\mathcal{V}[\mathbf{x}]+\alpha]$
\end{thm}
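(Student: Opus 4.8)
The plan is to prove the final theorem by direct unfolding of the definition of $\infset_{\rrcount}$ and a straightforward membership verification. The statement asserts that $\infset_{\rrcount}$ contains every pair $(\mathbf{x},y)$ with $\mathbf{x}\in\accset_{\rrcount}$ and $y\in[\targetvals[\mathbf{x}]-\alpha,\targetvals[\mathbf{x}]+\alpha]$. Since both $\accset_{\rrcount}$ and $\infset_{\rrcount}$ are defined explicitly in terms of the canonical representatives $1^i0^{n-i}$, the argument should be essentially a matter of matching indices, so I expect this to be the easiest of the theorems in the section.

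First I would fix an arbitrary $\mathbf{x}\in\accset_{\rrcount}$. By the definition $\accset_{\rrcount}=\{1^i0^{n-i}\}_{i\in\{0,\dots,n\}}$, there is some $i\in\{0,\dots,n\}$ with $\mathbf{x}=1^i0^{n-i}$. Next I would compute the target value: since $\targetvals$ is the counting function $count$ for $\rrcount$, we have $\targetvals[\mathbf{x}]=count(1^i0^{n-i})=i$. Hence the relevant output window is $y\in[i-\alpha,i+\alpha]$. Now I would fix an arbitrary such $y$ and show $(\mathbf{x},y)=(1^i0^{n-i},y)\in\infset_{\rrcount}$. Looking at the definition $\infset_{\rrcount}=\bigcup_{i\in[0,n]}\bigcup_{j\in[i-\alpha,i+\alpha]}\{(1^i0^{n-i},j)\}$, the pair $(1^i0^{n-i},y)$ appears exactly in the term indexed by this same $i$ and by $j=y$, which lies in $[i-\alpha,i+\alpha]$ by assumption. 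Therefore the pair is in the union, completing the argument.

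The only point requiring any care, and the closest thing to an obstacle, is confirming that the indexing variable $i$ in the inner union of $\infset_{\rrcount}$ ranges over exactly the same set $\{0,\dots,n\}$ that parametrizes $\accset_{\rrcount}$, and that the window $[i-\alpha,i+\alpha]$ used in $\infset_{\rrcount}$ matches $[\targetvals[\mathbf{x}]-\alpha,\targetvals[\mathbf{x}]+\alpha]$ after substituting $\targetvals[\mathbf{x}]=i$. Both of these are immediate from the definitions once the identity $count(1^i0^{n-i})=i$ is noted, so no genuine difficulty arises. Since the statement is purely about set membership and involves no probabilities, I would not need to invoke Lemma~\ref{lem:accsym} or any of the inference machinery here; the proof is a one-line chase through the two definitions, and I would present it as such.
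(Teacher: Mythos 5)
Your proposal is correct and follows essentially the same route as the paper's own proof: unfold the definition of $\accset_{\rrcount}$ to write $\mathbf{x}=1^i0^{n-i}$, note $\targetvals[\mathbf{x}]=count(1^i0^{n-i})=i$, and observe that the definition of $\infset_{\rrcount}$ includes $(1^i0^{n-i},y)$ for every $y\in[i-\alpha,i+\alpha]$. Your writeup is in fact slightly more careful than the paper's, since you make the identification $\targetvals[\mathbf{x}]=i$ explicit rather than leaving it implicit.
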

\begin{proof}
    Let $\mathbf{x}\in\accset_{\rrcount}$ such that $count(\mathbf{x})=i$. By definition of $\accset_{\rrcount}$, $\mathbf{x}=1^i0^{n-i}$. By definition of $\infset_{\rrcount}$, $\forall y\in [i-\alpha,i+\alpha]$, $(1^i0^{n-i},y)$ as desired.
\end{proof}

By these theorems, the accuracy bound algorithm with this inference and accuracy set is correct.

\subsubsection{Complexity}
We have $|\infset_{\rrcount}|=n\cdot2\alpha$ and $|\accset_{\rrcount}|=(n+1)\cdot \alpha$. Therefore inference (Algorithm \ref{alg:inf}) and accuracy bound synthesis (Algorithm \ref{alg:acc}) both run in $O(\wmctime\cdot n\alpha)$ (or $O(\wmctime\cdot n^2)$ time since $2\alpha\leq n$).

\section{Evaluation}
\label{sec:eval}

We have shown that our method is theoretically sound. In this section we demonstrate how our implemented method far outperforms prior methods even for the exhaustive case. We also show how the expressiveness of probabilistic programming can be leveraged to easily implement new algorithms with more complex input and output spaces, a feat which would be difficult using prior techniques for exact verification. We then go on to discuss how our method computes accuracy and demonstrate the importance of also having access to accuracy analysis when evaluating private algorithms.

We implement the weighted model counting solution with BDDs by running programs in the Dice probabilistic programming language \cite{holtzen2020scaling}. We compare with a Markov chain model checking solution using the Storm probabilistic model checker \cite{hensel2021storm}. All experiments are written in Python and use the solver (Dice or Storm) as a subprocess. We implement $\mathsf{INFERENCE}$ (Algorithm \ref{alg:inf}), the privacy bound synthesis algorithm (Algorithm \ref{alg:priv}), and the accuracy bound synthesis algorithm (Algorithm \ref{alg:acc}) in Python. To easily compare performance on different parameters, we also develop a tool to generate Dice programs and Storm models for different DP algorithms, input/output spaces, and randomization parameters. Experiments are run on a 16 core AMD EPYC with 64GB of RAM and all code will be made publicly available.

\subsection{Comparison to Markov Chain Model Checking}

The heart of our method is using weighted model counting on tractable circuits to vastly improve both the size of the models and the implementation runtime. In our experiments, we use the Dice probabilistic programming language to model and compute tight privacy bounds for the randomized response algorithm as explained in Section \ref{sec:rr} for variable number of clients $n$. 

Prior work on inference for exact DP verification uses model checking on Markov chains to perform exact inference \cite{liu2018model,liu2022verifying}. The examples in these papers are limited and require adaptation to solve our synthesis problem. We compare our solution with a similar Markov chain model checking technique and confirm that our WMC solution outperforms the Markov chain solution in model size, number of solver runs, and inference time.

In Table \ref{tab:storm}, we see that, for various $n$, Dice develops an extremely compact BDD for randomized response, where the model size is $n+2$. This contrasts with the Markov chain solution where the models are size $7^n$. This is because every state in the Markov chain represents a possible setting of every free variable in the model. With each client added in randomized response, we multiply the number of free variables in the Markov chain, which causes an exponential increase in the number of states. These tools were developed to handle dynamic programs for small state spaces, and are therefore not intended to handle problems with large numbers of free variables and little to no notion of dynamics over times. BDDs, however, are exactly optimized for this kind of problem.
\begin{table*}
    \setlength\arrayrulewidth{1pt}
    \centering
    \resizebox{\textwidth}{!}{%
    \begin{tabular}{|l|l|l|l|l|l|l|l|l|}
    \hline
    \textbf{Method}  & \textbf{n} & \textbf{BDD Size} & \textbf{\# States} & \textbf{\# Trans.} & \textbf{\# Solver Runs} & \textbf{Inf. Time (s)} & \textbf{Synth. Time (s)} & \textbf{Build Time (s)} \\ \hline
    Storm Exhaustive & 2 & - & 49 & 100 & 16 & 0.0104 & $<$ 0.0001 & 0.0001 \\ \specialrule{.05pt}{0pt}{0pt}
    Storm Exhaustive & 4 & - & 2401 & 8488 & 256 & 0.8877 & 0.0007 & 0.0005 \\ \specialrule{.05pt}{0pt}{0pt}
    Storm Exhaustive & 6 & - & TO & TO & TO & TO & TO & TO \\ \hline
    Dice Exhaustive & 2 & 4 & - & - & 4 & 0.152 & 0.0001 & 0.0005 \\ \specialrule{.05pt}{0pt}{0pt}
    Dice Exhaustive & 4 & 6 & - & - & 16 & 0.4128 & 0.0011 & 0.0015 \\ \specialrule{.05pt}{0pt}{0pt}
    Dice Exhaustive & 6 & 8 & - & - & 64 & 1.5683 & 0.0226 & 0.008 \\ \specialrule{.05pt}{0pt}{0pt}
    Dice Exhaustive & 8 & 10 & - & - & 256 & 8.097 & 0.5285 & 0.024 \\ \specialrule{.05pt}{0pt}{0pt}
    Dice Exhaustive & 10 & 12 & - & - & 1024 & 63.1988 & 11.016 & 0.0917 \\ \hline
    Storm Restricted & 5 & - & 16807 & 73054 & 32 & 1.371 & 0.0001 & 0.0001 \\ \specialrule{.05pt}{0pt}{0pt}
    Storm Restricted & 10 & - & TO & TO & TO & TO & TO & TO \\ \hline
    Dice Restricted & 5 & 7 & - & - & 1 & 0.0239 & $<$ 0.0001 & 0.0001 \\ \specialrule{.05pt}{0pt}{0pt}
    Dice Restricted & 10 & 12 & - & - & 1 & 0.0518 & 0.0001 & 0.0001 \\ \specialrule{.05pt}{0pt}{0pt}
    Dice Restricted & 15 & 17 & - & - & 1 & 1.2864 & 0.0001 & 0.0001 \\ \specialrule{.05pt}{0pt}{0pt}
    Dice Restricted & 20 & 22 & - & - & 1 & 53.9282 & 0.0002 & 0.0001 \\ \hline
    \end{tabular}%
    }
    \caption{Comparison of state space and runtimes for a BDD weighted model counting solution (Dice) and a Markov chain model checking solution (Storm). Model size for Dice is the number of nodes in the BDD (BDD size) and for Storm is the number of states and transitions in the Markov chain. TO means that the experiment timed out on probabilistic inference. For Dice and Storm we provide results for both the exhaustive solution and restricted solution which uses symmetry sets as described in Section \ref{sec:rr}. Model size is independent of the symmetry sets, e.g. the model size for $n=10$ is $12$ for both Dice exhaustive and Dice restricted. \# Solver Runs indicates the number of calls to the solver required to compute the necessary probabilities. Inf. and Synth. time are the the total times in seconds for running implementations of the Inference and Privacy Bound Synthesis algorithms including subprocess times. Build time is the time it takes to build the program or model sent to the solver.}
    \label{tab:storm}
\end{table*}
The Markov chain solution also requires a manually crafted Markov chain which means that ours might not be the optimal encoding of this algorithm. However, finding an optimal encoding is a hard problem and requires technical knowledge on the internals of the model checking process. A benefit of our solution is that the Dice program is almost identical to the randomized response pseudocode, and the optimization of the BDD size is handled automatically. 

Another benefit of inference via WMC on BDDs demonstrated in Table \ref{tab:storm} is that the Dice solution is able to compute the entire output distribution of the program for a given input in one run, whereas the Markov chain model checking solution requires a run for each input/output pair. As shown in Table \ref{tab:storm} we have to run Storm $4^n$ times in the exhaustive case, but even in the exhaustive case we only have to run Dice $2^n$ times. We leverage the symmetries in the randomized response algorithm such that for the Dice restricted solution, we only have to run the solver once, compared with Storm where we still have to run it $2^n$ times. 

The outcome of these factors is that the Storm solution times out for $n=6$ in the exhaustive case and $n=10$ in the restricted case, while our solution with a Dice solver can find a tight bound in around a minute for $n=10$ in the exhaustive case, and less than a minute for $n=20$ in the restricted case as can be seen in Table \ref{tab:storm}.

\subsection{Expressiveness and the Geometric Above Threshold}
We have shown that our method outperforms Markov chain model checking for computing tight privacy bounds both for the exhaustive and restricted solutions in the case of binary randomized response. We now demonstrate the expressiveness of our solution using the truncated geometric above threshold mechanism. 

The geometric above threshold takes as input a length $n$ list of integers in $\{0,1, \dots,k\}$, and outputs an integer representing the index of the first value in the list that exceeds the threshold. We provide the complete pseudocode for the randomized (private) above threshold algorithm in Appendix \ref{apdx:above}. For methods like model checking with Markov chains, this would be a very difficult algorithm to encode as it has multiple invocations of the geometric mechanism and multiple growth dimensions which would require many explicit states and transitions to encode as a Markov chain.

We encode this algorithm as a simple $2n+3$ line Dice program and run our exhaustive privacy verification bound algorithm on it. We see in Fig. \ref{fig:above_bdds} that, even with the exhaustive solution, we are able to easily encode compact, computable BDDs for a variety of list lengths and max int values.

Not only does Dice provide a simple encoding, but we see that the optimized BDDs are extremely reasonably sized. Even for integers between $0$ and $3$ and list lengths up to $6$ (which means $6^4=1296$ possible inputs and two invocations of the geometric distribution over $4$ values each), we see that the generated BDD is able to represent this with only $91$ nodes. This is still less than the number of states in the Markov chain for the much simpler randomized response algorithm for $n=3$ when looking at the Storm solution. This means that exact inference, even for more complex DP algorithms can be orders of magnitude more efficient than previously believed.

\begin{figure}[t]
    \centering
    \includegraphics[width=.3\linewidth]{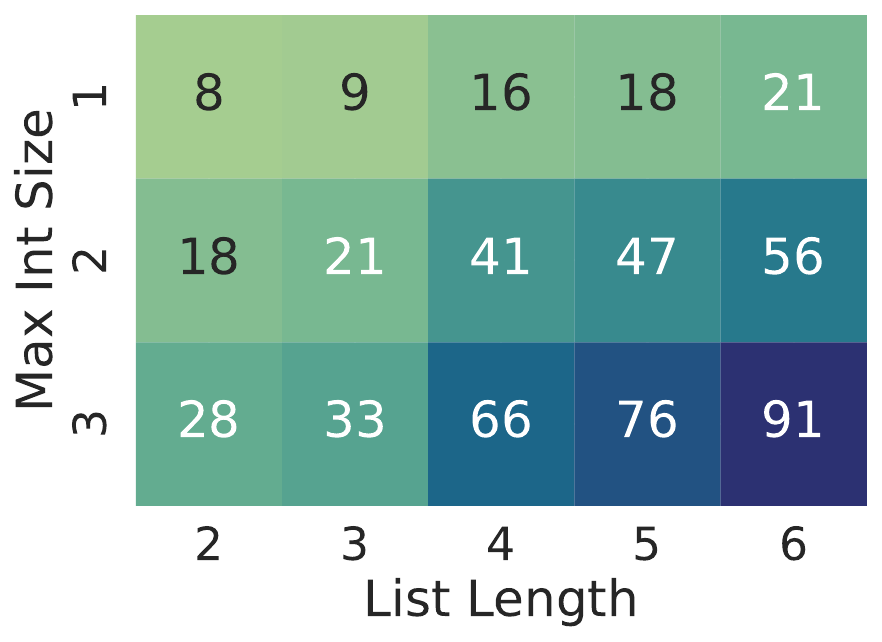}
    \caption{BDD sizes for the truncated geometric above threshold algorithm for various maximum integer sizes ($k$) and list lengths ($n$).}
    \label{fig:above_bdds}
\end{figure}

\subsection{Accuracy}
A key benefit of our method is its ability to compute tight accuracy bounds. We see in Fig. \ref{fig:rr_bdds} that having large integral output space (for example an output space of $\{1\dots n+1\}$ as is the case for the randomized response algorithm with a counting wrapper) causes the BDD sizes to grow more quickly than in privacy computation. As we noted in the theoretical portion of the paper, finding accuracy bounds also has larger time complexity due to the range of outputs bounded by $\alpha$. However we are still able to compute accuracy bounds for small $n$ in under 4 minutes, as shown in Fig. \ref{fig:rr_times}. 

\begin{figure}
    \centering
    \begin{subfigure}[t]{.28\linewidth}
        \centering
        \includegraphics[width=\linewidth]{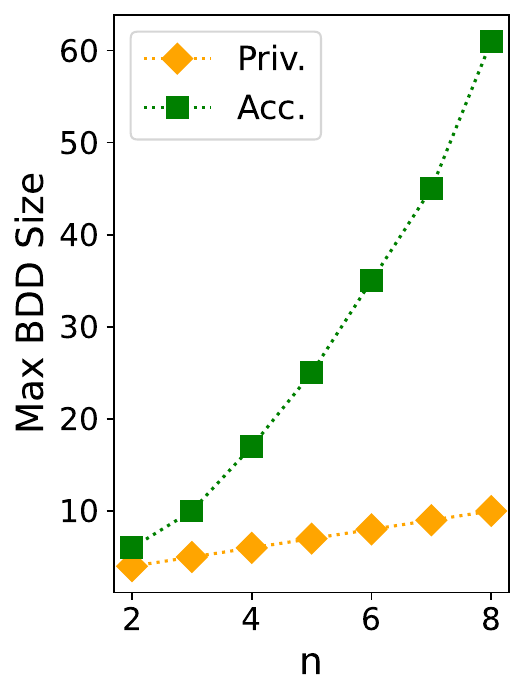}
        \caption{BDD Sizes}
        \label{fig:rr_bdds}
    \end{subfigure}
    \qquad
    \begin{subfigure}[t]{.363\linewidth}
        \centering
        \includegraphics[width=\linewidth]{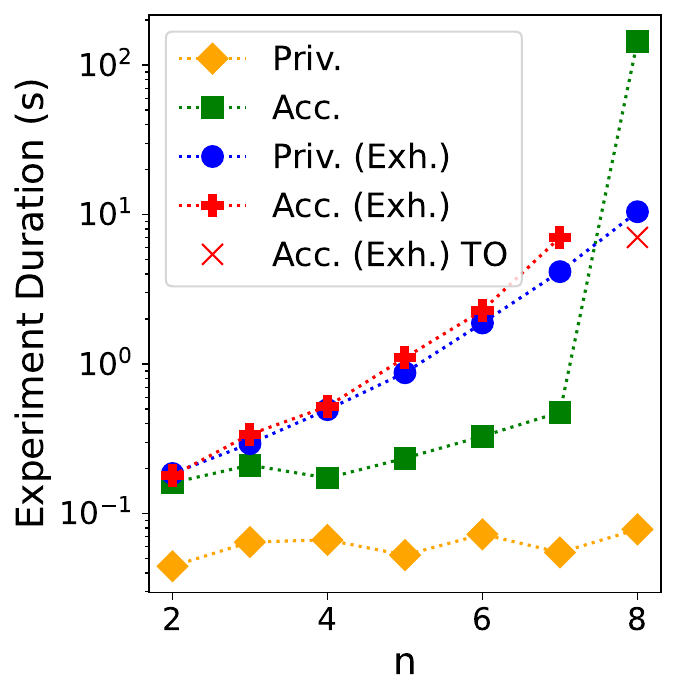}
        \caption{Experiment Durations}
        \label{fig:rr_times}
    \end{subfigure}
    \caption{BDD sizes and experiment durations for accuracy bound synthesis for randomized response with counting over various numbers of clients $n$ and $\lambda=0.2$.}
\end{figure}

Even though we are only able to perform tight accuracy bound synthesis on small examples, we begin to see the utility of our program as a tool in the algorithm design process. For example, in Fig. \ref{fig:bounds} we use our tool to find tight privacy and accuracy bounds for noise parameters. We see that noise is not linearly related to privacy or accuracy, as often implicitly assumed when algorithm designers minimize noise as a proxy for accuracy, as is done for differentially private synthesis as in \cite{roy2021learning}. Using such a proxy algorithm will result in an inexact outcome, and it is important to have a more detailed view of accuracy, even if $n$ is small.

\begin{figure}
    \centering
    \begin{subfigure}[t]{.35\linewidth}
        \centering
        \includegraphics[width=\linewidth]{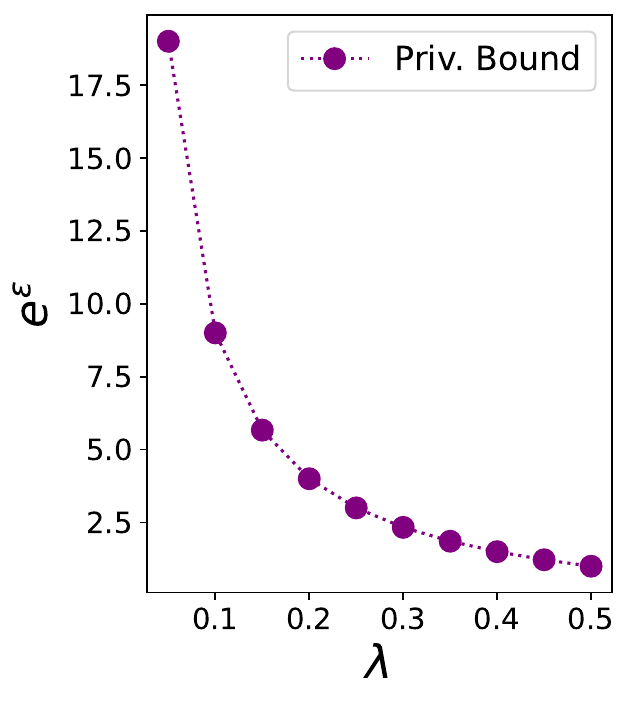}
        \caption{Privacy Bound (RR)}
        \label{fig:e_to_eps}
    \end{subfigure}
    \qquad
    \begin{subfigure}[t]{.35\linewidth}
        \centering
        \includegraphics[width=\linewidth]{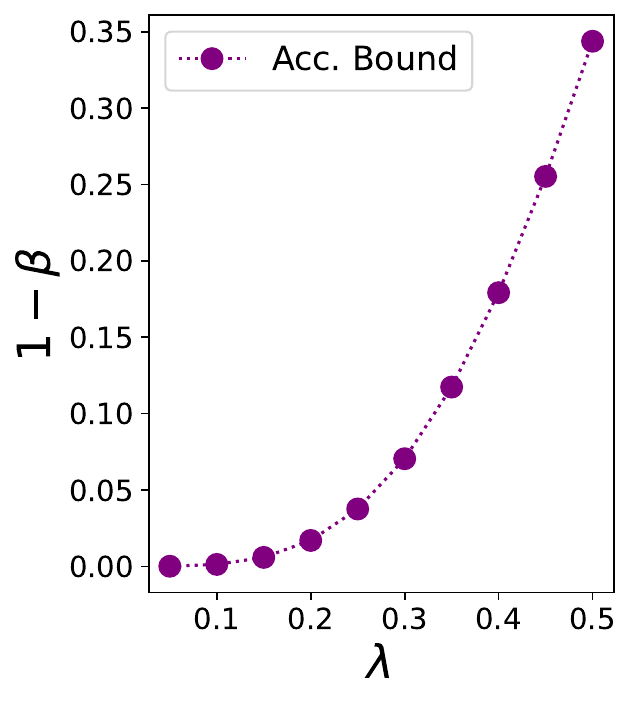}
        \caption{Accuracy (RR counting)}
        \label{fig:1-beta}
    \end{subfigure}
    \caption{Tight Accuracy and privacy bounds over varying coin flip parameters, $\lambda$, for binary randomized response with $n=8$. Generated using the Dice restricted privacy and accuracy bound synthesis implementation.}
    \label{fig:bounds}
\end{figure}

Our method goes beyond bound synthesis. We can run a variation of the accuracy bound synthesis algorithm and get each example for which there is a unique $1-\beta$, sorted by their values. We show a result generated by our tool of 4 inputs for which $1-\beta$ is lowest for randomized response with $n=8$ and $\lambda=0.2$ in Table \ref{tab:worst}. An algorithm designer can use this information to inform which examples have the greatest impact on accuracy.

\begin{table}[]
    \centering
    \resizebox{.23\linewidth}{!}{%
    \begin{tabular}{l|l}
     \textbf{Input} & $\mathbf{1-\beta}$  \\ \hline
     00000000 & 0.9437184 \\
     11111111 & 0.9437184 \\
     01111111 & 0.9723904 \\
     00000001 & 0.9723904
    \end{tabular}%
    }
    \caption{4 inputs for which $1-\beta$ is lowest for randomized response counting with $n=8$ and $\lambda=0.2$. Generated using the Dice restricted accuracy bound synthesis implementation.}
    \label{tab:worst}
\end{table}

\section{Related Works}
\label{sec:related}
There are a number of related works on differential privacy verification. In \cite{liu2018model,liu2022verifying}, Liu \etal{} provide a probabilistic model checking approach for exactly solving the differential privacy verification problem. We provide a comparison with this approach in Section \ref{sec:eval}.

Other works which are closely related to ours are works on finding counterexamples to differential privacy or tightening bounds. In \cite{ding2018detecting}, Ding \etal{} develop a statistical procedure for finding a candidate counterexample to differential privacy in probabilistic algorithms. In \cite{wang2020checkdp}, Wang \etal{} develop a program analysis tool for finding counterexamples or proofs of validation of privacy in implementations of DP algorithms. In \cite{bichsel2018dp}, Bichsel \etal{} use a sampling approach to find counterexamples to differential privacy. In \cite{zhang2020testing}, Zhang \etal{} use interpreters to find counterexamples to differential privacy.

There is significant work on developing type systems and program logics for differential privacy algorithms. This work begins with \cite{reed2010distance} in which Reed \etal{} develop a type system for differential privacy. Following this, \cite{gaboardi2013linear,barthe2015higher,zhang2017lightdp,near2019duet,fredrikson2014satisfiability} present other type systems and program logics for differential privacy. There is a related line of work which develops and implements a hoare logic approach and uses probabilistic couplings to mechanize proofs of differential privacy via the apRHL logic and EasyCrypt \cite{barthe2013verified,barthe2012probabilistic,barthe2014proving,barthe2016proving,hsu2017probabilistic}. In \cite{albarghouthi2017synthesizing}, Albarghouthi \etal{} propose a technique to automate these proofs. In \cite{barthe2016differentially}, Barthe \etal{} develop a probabilistic programming framework which uses inference for writing verifiable DP programs.

There are also works on program synthesis for DP programs \cite{roy2021learning,smith2019synthesizing,wang2021dpgen}. Our techniques could be used in these synthesis approaches for verifying intermediate programs, and as a more exact method of ensuring high accuracy while satisfying DP properties.

In a more theoretical approach, Gaboardi \etal{} \cite{gaboardi2019complexity} find complexity bounds for exact verification of DP in non-looping programs. In follow-up work to this, Bun \etal{} \cite{bun2022complexity} extend this analysis to looping programs.

Privacy auditing provides empirical methods to estimate the privacy leakage of an ML algorithm by mounting privacy attacks. Privacy auditing can be performed with membership inference attacks~\cite{pmlr-v202-zanella-beguelin23a} or data poisoning~\cite{AuditingDP,nasr2021adversary}, but initial techniques developed for auditing require training thousands of ML models to provide confidence intervals for the estimated lower bound on the privacy parameter. Recent techniques showed how to rigorously perform estimation of the privacy parameter while using multiple randomized canaries~\cite{pillutla2023unleashing} and eventually training a single ML model~\cite{andrew2023oneshot,steinke2023auditing}.

To our knowledge, there is very little work on theoretical or applied formal verification or bound synthesis for accuracy of DP algorithms. In \cite{barthe2021deciding}, Barthe \etal{} provide theoretical analysis on the decidability of accuracy different classes of probabilistic computations. In \cite{lobo2020programming}, Lobo-Vesga \etal{} develop a programming framework for estimating accuracy bounds.

\section{Conclusion and Future Work}
\label{sec:conclusion}
We believe that our novel approach for synthesizing tight privacy and accuracy bounds has a lot of promise for developing techniques for computing accuracy and privacy properties in a wide range of differential privacy applications. Our approach uses state-of-the-art techniques for probabilistic inference via weighted model counting, and can benefit from ongoing advances in artificial intelligence and automated reasoning. 

Though we are able to vastly improve automation from prior work by leveraging expressive probabilistic programming languages, one limitation is the manual step of identifying symmetries to define the inference, comparison, and accuracy sets. While there is room for improvement, in our experiments we demonstrate how the power of our WMC solution provides advances and utility, even for the exhaustive approach. In future work, we plan to investigate techniques from SAT solving and symmetry breaking to improve automation. 

In future work, we can also extend this framework to other definitions of differential privacy including Renyi and approximate differential privacy \cite{dwork2014algorithmic}. Additionally, we can look at more complex systems with more complicated probability distributions such as randomized response with amplification by shuffling \cite{cheu2019distributed}.

Our work opens the door for further automation and accessibility of using probabilistic programming languages techniques for verification and synthesis of DP algorithms using state-of-the-art inference tools.

\section*{Acknowledgment}
We would like to thank Professors Marco Gaboardi and Jon Ullman for their technical guidance and input on differential privacy case studies. Thanks also to Lydia Zakynthinou, Konstantina Bairaktari, Ludmila Glinskih, Minsung Cho, and LaKyah Tyner for discussions about approaches and theoretical analysis. This work has been supported by NSF grant CNS-2247484.
\bibliographystyle{IEEEtran}
\bibliography{refs.bib}

\appendix
\section{Truncated Geometric Above Threshold}
\label{apdx:above}
The above threshold algorithm takes as input an array of values and a thresholding value $T$ and outputs the index of the first value in the input array that exceeds $T$. The DP version of this algorithm adds noise both to each entry and to the threshold $T$ \cite{dwork2014algorithmic}. 
\begin{algorithm}
    \caption{Truncated Geometric Above Threshold}
    \label{alg:above}
    \begin{algorithmic}[1]
    \renewcommand{\algorithmicrequire}{\textbf{Input:}}
    \renewcommand{\algorithmicensure}{\textbf{Output:}}
    \REQUIRE Array of input integers $X$, threshold value $T$, randomness parameters $\lambda_1$ and $\lambda_2$
    \ENSURE Index $i$
    \STATE $\hat{T}\gets T+Geom(\lambda_1)$
    \FOR{$x_i\in X$}
        \STATE $r_i\gets Geom(\lambda_2)$
        \IF{$x_i+r_i\geq \hat{T}$}
            \STATE return $i$
        \ENDIF
    \ENDFOR
    \RETURN $\mathbf{y}$
    \end{algorithmic}
\end{algorithm}

We use a discrete, finite variation of this DP algorithm which is defined over integer-valued input arrays, integer-valued $T$, and uses the truncated geometric mechanism for the added noise \cite{balcer2017differential}. We provide the pseudocode for this variation in Algorithm \ref{alg:above}. In our experiments, we set randomness parameters $\lambda_1=\lambda_2$ for simplicity.

\end{document}